\newtheorem{theorem}{Theorem}[section]
\newtheorem{lemma}[theorem]{Lemma}
\newtheorem{proposition}[theorem]{Proposition}
\newtheorem{corollary}[theorem]{Corollary}
\newtheorem{remark}{Remark}
\newtheorem{definition}{Definition}
\newcommand{\DETAILS}[1]{}
\begin{document}

\title{The Hartree-von Neumann limit of many body dynamics\thanks{This paper is a part of the first author's Ph.D
thesis.}%\ \thanks{Supported by NSERC under Grant NA7901.}
}
\author{ I. Anapolitanos \thanks{Department of Mathematics, University
of Toronto, Toronto, Canada; Supported by NSERC under Grant NA7901.}
%$^{1,\dagger}$\thanks{Supported by NSERC under Grant NA7901}
\  I.M Sigal \thanks{School of Mathemaitics, IAS, Princeton, N.J.,
U.S.A., on leave from Dept. of Mathematics, Univ. of
Toronto, Toronto, Canada; Supported by NSERC under Grant NA7901.}}

\date{ April 20, 2009}
\maketitle

\abstract{In the mean-field regime, we prove 
convergence (with explicit bounds) of the many-body von Neumann dynamics with bounded
interactions to the Hartree-von Neumann dynamics.}

\section{Introduction}

Derivation of macroscopic equations from microscopic ones is one of
the main challenges of Mathematical Physics. This is usually a
daunting task met with a very limited success. In the last few years
a considerable progress was made on one such problem - derivation of
the Hartree, Hartree-Fock and Gross-Pitaevskii equations in the mean-field and
Gross-Pitaevskii regimes, respectively (\cite{BEGMY, BGGM, FGS, FKS,
FKP, AN, LSY, LS, ErY, ESY, ES, KF, KM, GM, KSS}). Though the work on
the Gross-Pitaevskii limit is quite recent, the work on the
mean-field one goes back to the papers
\cite{He,GV, S}. %, whose results were extended further in \cite{GV, S}.

In this note we prove the convergence of solutions of the $N-$body
von Neumann equation with product initial conditions to the $N-$fold product
of solutions of the Hartree-von Neumann equation with the
corresponding initial conditions and estimate the rate of
this convergence. %of solutions of the many-body Schr\"odinger
%equation to solutions of the Hartree equation in
The regime we consider is the mean-field one, i.e. with the number
of particles going to infinity, while the strength of interaction
decreasing in the inverse proportion to the number of particles. In
our analysis we follow closely the beautiful work \cite{FGS}. %, but
%provide more streamlined and detailed proofs.
%and providing details to some sketchy arguments.
One of the new elements of our approach is a Hamiltonian formulation of the Hartree-von Neumann equation. While
this work has been written up there appeared e-prints \cite{RS}, \cite{ErS} and  \cite{GMM}
giving, by 
different techniques, estimates of the rate of convergence in the case of the
$N-$body Schr\"odinger and Hartree equations.

Let $\rho^{\otimes N}:=\rho \otimes ...\otimes \rho$. We start with
the time-dependent von Neumann equation for a system of N bosons
\begin{equation}\label{von Neumann}
\left\{ \begin{array}{ccc} i \hbar \frac{\partial \rho_N} {\partial
t}= [H_N, \rho_N] \\ \rho_N|_{t=0}= \rho_0^{\otimes N},
\end{array} \right.
\end{equation}
with  $\rho_N = \rho_N (t) $ acting on $L^2(\mathbb{R}^{3})^{\otimes N}$ and $\rho_0 $, a
positive, trace-class operator on $L^2(\mathbb{R}^{3})$ of trace $1$. Here $ H_N=H_N^0+V_N $,
with $ H_N^0=-\sum_{i=1}^N h_{x_i},\ h_x \equiv h$ is a self-adjoint operator in a variable $x$, e.g. $h_x:=\frac{\hbar^2}{2} \Delta_x +W(x)$, and
\begin{equation}\label{Nbodpot}
V_N=\frac{g}{2} \sum_{i \neq j}^N  v(x_i-x_j).
\end{equation}
Since $\sum_{i \neq j}^N v(x_i-x_j)=\sum_{i \neq
j}\frac{1}{2}(v(x_i-x_j)+v(x_j-x_i))$ we can assume without loss of
generality that the two-body potential $v$ is even: $v(x)=v(-x)$.
We consider the mean field regime:
%\begin{equation}
%$%g \sim \frac{1}{N}, \quad i.e \quad
%gN=O(1).$ %\end{equation}
%We assume in what follows that
 $N \rightarrow \infty$ and $g \rightarrow 0$ with $gN \rightarrow c$. By changing $v$, if necessary, we can assume that
\begin{equation}\label{g1n}
g=\frac{1}{N}.
\end{equation}
We will relate the von Neumann equation \eqref{von
Neumann} to the Hartree-von Neumann equation
\begin{equation}\label{Hartree-von Neumann}
\left\{ \begin{array}{ccc} i \hbar \frac{\partial \rho}{\partial
t}=[h +(v*n_\rho), \rho ] \\
\rho|_{t=0}=\rho_0, \end{array} \right.
\end{equation}
where $v=v(x)$ is the same two-body potential as above and $n_\rho
(x, t):= \rho (x; x, t)$,  the probability or charge
density, with $\rho (x; y, t)$ the integral kernel of $\rho$.
%Recall that $n_\rho (x)$ is interpreted  as.

For $v \in L^\infty$ one can show easily that \eqref{Hartree-von Neumann} is globally well-posed on the space of positive, trace-class operators and that the trace, $Tr\rho$, and the energy,
\begin{equation} \label{E}
E(\rho):=Tr
( %(-\frac{\hbar^2}{2}\Delta)
h \rho)+ \frac{1}{2}\int n_\rho\ v * n_\rho,
\end{equation}
 are conserved. Moreover, %the eigenvalues of  $\rho$ are conserved as well and therefore the %positivity of $\rho_0$ is preserved by the corresponding flow. 
$\rho$ is non-negative, provided so is $\rho_0$. See Appendix \ref{app:HvN}. 

In Section \ref{secCFT} we show that \eqref{Hartree-von Neumann} is a Hamiltonian system with the Hamiltonian \eqref{E} and the Poisson bracket
\begin{equation}\label{poisson}
\{A(\rho), B(\rho)\}=-\frac{i}{\hbar}Tr
\left(\partial_{\rho}A(\rho)\rho
\partial_{\rho} B(\rho)-\partial_{\rho}B(\rho)\rho \partial_{\rho}A(\rho)\right),
\end{equation}
where $A(\rho)$ and  $B(\rho)$ are differentiable functionals of $\rho$ and the operator (Fr\'echet
derivative) $\partial_{\rho}A(\rho)$ is defined by the equation
$Tr(\partial_{\rho}A(\rho)\xi)= \partial_s A(\rho+
s\xi)|_{s=0}.$ This, as was mentioned above, plays an important role in our analysis.  

Finaly, note that %$\eqref{von Neumann}$ is symmetric with respect to
%permutations of the particle coordinates. Hence, 
since the integral kernel of the operator 
$\rho_N|_{t=0} %\in L^2(\mathbb{R}^{3N})
$ is symmetric with respect to permutations of particle
coordinates, the same is true for the solution to \eqref{von
Neumann}.

To formulate the main result we need some notation and definitions.
For any Banach space $X$, we denote the space of bounded linear
operators from $X$ to itself by $B(X)$. Let $L_S^2(\mathbb{R}^{3M})$
be the subspace of $L^2(\mathbb{R}^{3M})$ consisting of the
functions that are symmetric with respect to permutation of
particle coordinates, and let
\begin{equation}
P_S^M \Psi(x_1,...,x_M):=\frac{1}{M!}\sum_{\sigma \in S^M}
\Psi(x_{\sigma(1)},...,x_{\sigma(M)}),
\end{equation}
where $S^M$ denotes the permutation group of the set
$\{1,2,...,M\}$, be the orthogonal projection onto the subspace
$L_S^2(\mathbb{R}^{3M})$ of $L^2(\mathbb{R}^{3M})$
 We denote by $I^q$ the identity operator acting on $q$ coordinates.
\begin{definition}
Let $\mathcal{A}_p:=B (L_S^2(\mathbb{R}^{3p}))$. For $p<N$, we
define the maps $\phi_{p}^N:\mathcal{A}_p \rightarrow \mathcal{A}_N$
by
\begin{equation}\label{lifting}
\phi_p^N(a):=P_S^N(a \otimes I^{N-p})P_S^N.
\end{equation}
\end{definition}
Let $\mathcal{A}_{N,p}$ be the image of $\mathcal{A}_p$ under
$\phi_p^N$. Its elements will be called \textit{quantum $p$-particle
observables}
%. In the rest of the text we will call them
or simply \textit{$p$-particle observables}. Note that
$\mathcal{A}_{N,1} \subset \mathcal{A}_{N,2} \subset  ...\
\mathcal{A}_{N,N}=\mathcal{A}_N$.

The following theorem relates the asymptotic behavior of \eqref{von
Neumann} as $N \rightarrow \infty$ with \eqref{Hartree-von Neumann}.
 \begin{theorem}\label{main}
Assume that $v$ is bounded and % \in L^\infty,\ $ $v$
even and that \eqref{g1n} holds. Let $\rho_N$ solve \eqref{von
Neumann} with $Tr \rho_0=1$. Then for any $p$ and any $A=\phi_p^N(a) \in
\mathcal{A}_{N,p},$ and for  $N \rightarrow \infty,$ $$  Tr (A\rho_N)-Tr( A \rho^{\otimes N}) \rightarrow 0,$$ where $\rho$ solves \eqref{Hartree-von Neumann}. Moreover, we have the estimate
%$a \in \mathcal{A}_p$,
\begin{equation}\label{2star}
|Tr (A\rho_N)-Tr( A \rho^{\otimes N})| \leq
%2^{([\frac{t}{\tau}]+1)p} \left(p N^{-1/2}+N^{\frac{- \ln 2}{4e
%[\frac{t}{\tau}]!}}\right)
2^{([\frac{t}{\tau}]+2)p} N^{-\gamma (t)}\|a\|_{\mathcal{A}_{p}},
\end{equation}
where
$\tau=\frac{\hbar}{8\|v\|_{\infty}}$, $[\frac{t}{\tau}]$ is the
integer part of $\frac{t}{\tau}$ and $\gamma (t):=\frac{ 1}{4e
[\frac{t}{\tau}+1]!}$.
 \end{theorem}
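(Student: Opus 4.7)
Set $D(A,t):=Tr(A\rho_N(t))-Tr(A\rho(t)^{\otimes N})$, so that $D(A,0)=0$ for $A=\phi_p^N(a)$. The first step is to notice that $\rho^{\otimes N}$ evolves under a single-particle mean-field Hamiltonian: by the Leibniz rule applied to the tensor product and by \eqref{Hartree-von Neumann},
\[
i\hbar\,\partial_t\rho^{\otimes N}=[\widetilde H_N(t),\rho^{\otimes N}],\qquad \widetilde H_N(t):=\sum_{i=1}^N\bigl(h_{x_i}+(v*n_{\rho(t)})(x_i)\bigr)\in\mathcal A_{N,1}.
\]
Writing $H_N=\widetilde H_N(t)+R_N(t)$ with $R_N(t):=V_N-\sum_i(v*n_{\rho(t)})(x_i)$, subtracting the two von Neumann equations and using cyclicity yields
\[
i\hbar\,\dot D(A,t)=D\bigl([A,\widetilde H_N(t)],t\bigr)+Tr\bigl([A,R_N(t)]\rho_N(t)\bigr).
\]
Since $\widetilde H_N$ is one-body, the commutator $[A,\widetilde H_N]$ stays in $\mathcal A_{N,p}$; the non-trivial term is the mean-field residual.

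The decisive second step is to unpack $[A,R_N]$ for $A=\phi_p^N(a)$. The symmetrization built into $\phi_p^N$ together with the scaling $g=1/N$ in \eqref{Nbodpot} splits $[A,V_N]$ into (i) a self-interaction piece among the $p$ active coordinates of operator norm $O(p\|v\|_\infty\|a\|/N)$, and (ii) a cross piece in $\mathcal A_{N,p+1}$ which, traced against $\rho_N$, approximately cancels $[A,\sum_i(v*n_{\rho})(x_i)]$ against $\rho^{\otimes N}$ up to a further $(p{+}1)$-particle observable of norm $\lesssim\|v\|_\infty\|a\|_{\mathcal A_p}$. One can thus write
\[
Tr\bigl([A,R_N(t)]\rho_N(t)\bigr)=D\bigl(A'(t),t\bigr)+O\bigl(p\|v\|_\infty\|a\|_{\mathcal A_p}/N\bigr),
\]
where $A'(t)=\phi_{p+1}^N(a'(t))$ with $\|a'(t)\|_{\mathcal A_{p+1}}\le C\|v\|_\infty\|a\|_{\mathcal A_p}$. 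Combined with the equation for $\dot D(A,t)$, Duhamel then gives an integral inequality relating $D(A,\cdot)$ to $D(A',\cdot)$ plus an explicit $p/N$ remainder.

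The third step is to iterate. Each iteration raises the particle number $p\mapsto p+1$ and costs one extra time integration, so after $k$ iterations one has a nested-integral representation of $D(A,t)$ involving $D(A^{(p+k)},\cdot)$ with combinatorial weight $\le (Ct\|v\|_\infty/\hbar)^k/k!$, plus accumulated mean-field remainders $\lesssim\sum_{j<k}(Ct\|v\|_\infty/\hbar)^j(p+j)/N$. On the short window $[0,\tau]$ with $\tau=\hbar/(8\|v\|_\infty)$ the geometric factor is $\le 2^{-k}$; closing the $(p{+}k)$-particle term with the trivial bound $|D(B,\cdot)|\le 2\|B\|$ and optimizing $k\sim \log N$ yields a short-time estimate of the form $2^{cp}N^{-1/(4e)}\|a\|_{\mathcal A_p}$. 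For larger times I would concatenate $[t/\tau]+1$ such windows: at each restart $j\tau$ the exact state $\rho_N(j\tau)$ is replaced by $\rho(j\tau)^{\otimes N}$, and this mismatch is itself a previously-controlled $D$-term that, when propagated by the mean-field Heisenberg flow on the next window, costs an extra multiplicative factor $2^p$ (the cost of matching symbols of a mean-field-evolved $p$-particle observable). Collecting factors produces the stated bound $2^{([t/\tau]+2)p}N^{-\gamma(t)}\|a\|_{\mathcal A_p}$ with $\gamma(t)=1/(4e[t/\tau+1]!)$.

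The main obstacle is the joint combinatorial bookkeeping: each Duhamel iteration inflates the particle number and hence the trivial bound on the closing observable, while mean-field smallness $1/N$ is extracted only at the rate of one factor per $k!$ steps. Balancing these gains and losses on each short window, and then compounding the restart error over $[t/\tau]+1$ windows, is precisely what forces $\gamma(t)$ to decay like the reciprocal of a factorial. The Hamiltonian/Poisson formulation \eqref{poisson} of \eqref{Hartree-von Neumann} from the next section provides the clean framework in which the commutators $[A,\widetilde H_N]$ are identified with canonical flows, so that operator-norm control persists uniformly in $N$ at each step of the induction.
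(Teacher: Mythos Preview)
Your differential--Duhamel setup for $D(A,t)$ is essentially the same mechanism as the paper's $\Gamma_t$ approach (the two are linked by $Tr(A\rho_N)=Tr(\Gamma_t(A_t)\rho_0^{\otimes N})$), and your tree/loop splitting of $[A,V_N]$ is exactly the decomposition \eqref{treeloop}. Two points need correction, however.

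\textbf{Short time.} Your claimed identity $Tr([A,R_N]\rho_N)=D(A',t)+O(p\|v\|_\infty\|a\|/N)$ is not quite right as written: the mean-field piece $[A,\sum_i(v*n_\rho)(x_i)]$ is a $p$-particle observable whose trace is against $\rho_N$, not $\rho^{\otimes N}$, so an extra level-$p$ $D$-term appears; it cancels against the corresponding part of $D([A,\widetilde H_N],t)$, and what survives is $D([A,H_N^0],t)+D(T(A),t)+O(p^2/N)$, which is the correct recursion. More importantly, on the window $[0,\tau]$ the truncation remainder after $k$ iterations is $\lesssim 2^{p-k}\|a\|$ while the accumulated loop terms are $\lesssim 2^p p/N\,\|a\|$; taking $k\to\infty$ (or $k\gtrsim\log_2 N$) gives the short-time bound $\lesssim 2^p p/N$, which is \eqref{tau_estimate}. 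Your stated short-time outcome $2^{cp}N^{-1/(4e)}$ is much weaker than what the argument actually yields, and the constant $1/(4e)$ does not enter at this stage.

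\textbf{Long time.} This is where the real content lies, and your sketch (``replace $\rho_N(j\tau)$ by $\rho(j\tau)^{\otimes N}$, propagate the mismatch, pick up $2^p$ per window, collect factors'') does not close. The difficulty is that to pass from step $k$ to step $k+1$ one must expand $\alpha_\tau(A)=\sum_{n<L}A_n(\tau)+\text{rem}$ with $A_n(\tau)\in\mathcal A_{N,p+n}$, then apply the inductive bound $R_{p+n,k}$ to each $A_n(\tau)$. The sum $\sum_{n<L}R_{p+n,k}\,2^{-n}$ is controlled only if the truncation level $L=L_{k+1}$ is chosen so that the remainder $2^{-L_{k+1}}$ matches the size of $2^{-L_k}$ from the previous step; this forces the recursion $L_k=kL_{k+1}$, i.e.\ $L_k=L_0/(k-1)!$, and then $L_0=\tfrac{1}{4e}\log_2 N$ is dictated by the constraint $\sum_r rL_r\le 2eL_0$ (see \eqref{ktau}--\eqref{remstep}). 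It is precisely this factorially decreasing sequence of cutoffs that produces $\gamma(t)=1/(4e[t/\tau+1]!)$. Your concatenation paragraph contains no analogue of this balancing, so the claimed exponent cannot be derived from what you wrote.
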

\begin{remark}
The $N$-bound in this theorem is rather poor, especially compared with estimates for the Schr\"odinger equation mentioned above. It is improved
somewhat in \cite{An}, where an extension of this result to Coulomb-type potentials is also presented.
\end{remark}

From now on we fix the particle number $N$ and sometimes drop the
corresponding index from the notation. For instance, we write $P_S$
for $P_S^N$, $\phi_p$ for $\phi_p^N,$ and $V$ for $V_N$. % In what

The rest of the paper is devoted to a proof of Theorem \ref{main}.
The paper is organized as follows. In Section \ref{secG} %-\ref{secIG}
we derive a convenient equation for the map $
\Gamma_t(A):=e^{\frac{i H_N t}{\hbar}} e^{\frac{-i H_N^0 t}{\hbar}}
A e^{\frac{i H_N^0 t}{\hbar}} e^{\frac{-i H_N
 t}{\hbar}}$, which is connected to the l.h.s. of \eqref{2star}. To this end we use a decomposition of the commutator
with the many-body potential into the tree and loop operators,
introduced in \cite{FGS}. We use this equation in Section
\ref{secAG} in order to approximate $\Gamma_t$ by an operator
$\Gamma_t^H$ whose expansion contains only tree operators. In
Section \ref{secCFT} we discuss the Hamiltonian and Liouvillean
formulations of the Hartree-von Neumann equation and the Dyson
expansion for the latter.
We also show that in certain (symbolic)
representation the tree operators act as Poisson brackets. %with the
%freely evolved two-body potential.
In Section \ref{secCtau} we prove the result of Theorem \ref{main}
for small times and in Section \ref{secC} we use the group
properties of the von Neumann and Hartree-von Neumann dynamics in
order to extend the proof to all times.

\section*{Acknowledgement} The second author (I.M.S.) is grateful to J\"urg Fr\"ohlich for numerous stimulating discussions and for introducing him to the mean-field problems, and
to IAS and ETH-Z\"urich, for hospitality. J\"urg Fr\"ohlich has
informed one of us (I.M.S.) that he has defined a different Poisson bracket for
the Hartree-von Neumann equation.
%%%%%%%%%%%%%%%%%%%%%%%%%%%%%%%%%%%%%%%%%%%%%%%%%%%%%%%%%%%%%%%%%%

 %%%%%%%%%%%%%%%%%%%%%%%%%%%%%%%%%%%%%%%%%%%%%%%%%%%%%%%%%%%
 %%%%%%%%%%%%%%%%%%%%%%%%%%%%%%%%%%%%%%%%%%%%%%%%%%%%%%%%%%%%
 %%%%%%%%%%%%%%%%%%%%%%%%%%%%%%%%%%%%%%%%%%%%%%%%%%%%%%%%%%%%%
 \section{Map $\Gamma_t$ and its equation}\label{secG}

In this section we derive a convenient equation for the family of
operators $\Gamma_t$ acting on elements of
 $\mathcal{A}_N$ and defined by
\begin{equation}\label{subtractfree}
 \Gamma_t(A):=e^{\frac{i H_N t}{\hbar}} e^{\frac{-i H_N^0 t}{\hbar}} A e^{\frac{i H_N^0 t}{\hbar}} e^{\frac{-i H_N
 t}{\hbar}}.
\end{equation}
This family is related  to the l.h.s. of \eqref{2star} as $ Tr (A
\rho_N) = Tr( \Gamma_t(A_t) \rho_0^{\otimes N})$, where
$\rho_0,\rho_N$ are the same as in \eqref{von Neumann} and $A_t$
denotes the free evolution of $A$:
 \begin{equation}\label{free}
 A_t:=e^{\frac{i H_N^0 t}{\hbar}} A e^{\frac{-i H_N^0 t}{\hbar}}.
\end{equation}
Writting  $\Gamma_r(A)$ as the integral of derivative, we obtain
\DETAILS{By differentiating $\Gamma_r(A)$ with respect to $r$ and
using that operators commute with their exponentials we obtain that
$\partial_r \Gamma_r(A)=\frac{i}{h} \Gamma_r([V_r,A])$, which after
integration yields the equation}
\begin{equation}\label{derivative}
\Gamma_t(A)=A+\frac{i}{\hbar} \int_0^t \Gamma_r ([V_r,A])dr.
\end{equation}

A simple analysis shows that the solution of this equation is unique
in $C([0,\infty),B(\mathcal{A}_N)).$

Now we decompose the commutator on the r.h.s. of \eqref{derivative}
in a convenient way (cf. \cite{FGS}). For any $i,j \leq N, i \neq j$ denote by
$V^{ij}$ the multiplication operator by $v(x_i-x_j)$. %, or in

\begin{proposition}
We have for $a \in \mathcal{A}_p$,
\begin{equation}\label{treeloop}
\frac{i}{\hbar}[V_r,\phi_p(a)]=T_r(\phi_p(a))+L_r(\phi_p(a)),
\end{equation}
where $T_r(\phi_p(a))=0$ for $p \ge N$ and otherwise
\begin{equation}\label{newtree}
T_r(\phi_p(a))= \frac{N-p}{N} \phi_{p+1}(X_{p,r}(a)),
\end{equation}
with $X_{p,r}: \mathcal{A}_p \rightarrow \mathcal{A}_{p+1}$, defined
by
\begin{equation}\label{mta}
X_{p,r}(a):= p P_S^{p+1} \frac{i}{\hbar} [V_r^{p,p+1},a \otimes
I^1]P_S^{p+1},
\end{equation}
for $p < N$, and
\begin{equation}\label{newloop}
L_r(\phi_p(a))= \phi_p(Y_{p,r}(a)), \ \mbox{with}\
Y_{p,r}(a)=\frac{p(p-1)}{2N} (P_S^p \frac{i}{\hbar}[V_r^{p,p-1},a]
P_S^p ).
\end{equation}
\end{proposition}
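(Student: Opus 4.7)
The plan is to decompose $V_r = \frac{1}{2N}\sum_{i\neq j} V_r^{ij}$ according to how the ordered pair $(i,j)$ sits relative to the block $\{1,\ldots,p\}$ on which $a$ acts. Since $V_r$ is invariant under permutations of the $N$ particles, it commutes with $P_S^N$, so $[V_r,\phi_p(a)] = P_S^N [V_r, a\otimes I^{N-p}] P_S^N$. I would then split the sum into three classes: (I) $i,j \le p$, expected to produce the loop; (II) $i,j > p$, which commutes with $a\otimes I^{N-p}$ and drops out; (III) mixed, expected to produce the tree.

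The heart of the argument is to reduce each of classes (I) and (III) to a single representative pair. For any $\sigma \in S^N$ the identity $\sigma P_S^N = P_S^N = P_S^N \sigma$ gives
\begin{equation*}
P_S^N [V_r^{ij}, a\otimes I^{N-p}] P_S^N = P_S^N [V_r^{\sigma(i)\sigma(j)}, \sigma(a\otimes I^{N-p})\sigma^{-1}] P_S^N.
\end{equation*}
For class (I), I would choose $\sigma$ permuting only $\{1,\ldots,p\}$ and sending $(i,j)$ to $(p-1,p)$; since $a \in B(L_S^2(\mathbb{R}^{3p}))$ is symmetric, $\sigma(a\otimes I^{N-p})\sigma^{-1}$ agrees with $a\otimes I^{N-p}$ on symmetric functions, so all $p(p-1)$ pairs contribute identically. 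For class (III), the same symmetry move, combined with the evenness of $v$ (which gives $V_r^{ij} = V_r^{ji}$ and thus merges the sub-cases $i\le p<j$ and $j\le p<i$), reduces all $2p(N-p)$ mixed pairs to the single representative $(p,p+1)$.

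Collecting the combinatorial factors yields
\begin{equation*}
\tfrac{i}{\hbar}[V_r,\phi_p(a)] = \tfrac{p(N-p)}{N}\, P_S^N \tfrac{i}{\hbar}[V_r^{p,p+1}, a\otimes I^{N-p}] P_S^N + \tfrac{p(p-1)}{2N}\, P_S^N \tfrac{i}{\hbar}[V_r^{p-1,p}, a\otimes I^{N-p}] P_S^N.
\end{equation*}
The last step is to recognise these two terms as $T_r(\phi_p(a))$ and $L_r(\phi_p(a))$ by matching with the definitions of $X_{p,r}, Y_{p,r}, \phi_p, \phi_{p+1}$. The key identity needed in this rewriting is $P_S^N (P_S^k \otimes I^{N-k}) = P_S^N$ for $k \le N$ (a fully symmetric $N$-particle function is symmetric in any sub-block), which lets one insert or absorb the inner projections $P_S^{p+1}$ appearing in $X_{p,r}$ and $P_S^p$ appearing in $Y_{p,r}$. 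The only delicate point I anticipate is the symmetry reduction for classes (I) and (III): one must justify carefully that the $\sigma$-conjugate of $a\otimes I^{N-p}$ agrees with $a\otimes I^{N-p}$ on the range of $P_S^N$, which is exactly where the hypotheses that $a$ acts on symmetric functions and that $v$ is even enter.
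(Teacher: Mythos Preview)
Your proposal is correct and follows essentially the same route as the paper's proof: both pull the commutator inside the outer $P_S^N$'s using the permutation invariance of $V_r$, split the double sum according to how $\{i,j\}$ intersects $\{1,\ldots,p\}$, reduce each class to a single representative pair ``by symmetry,'' and then insert the inner projections $P_S^{p+1}$ and $P_S^p$ via $P_S^N(P_S^k\otimes I^{N-k})=P_S^N$. The only cosmetic differences are that the paper works with unordered pairs $\tfrac{1}{N}\sum_{i<j}$ from the outset (so the evenness of $v$ is used implicitly rather than to merge the two mixed sub-cases), and the paper does not separately name the vanishing class $i,j>p$; your more explicit justification of the symmetry reduction is exactly what underlies the paper's terse ``by symmetry.''
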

\begin{proof}
 Recall the notation $V_t=e^{\frac{iH_N^0 t}{\hbar}} V e^{-\frac{iH_N^0 t}{\hbar}}$. Using equations
 $V_r= \frac{1}{N} \sum_{i<j}^{1,N}V_r^{ij}$ and \eqref{g1n}, we obtain
\begin{equation}\label{decomp}
[V_r,A]=\frac{1}{N} \sum_{i<j}^N [V_r^{ij},A].
\end{equation}
%%%%%%%%%%%%%%%%%%%%%%%%%%%%%%%%%%%%%%%%%%%%%%%%%%%%%%%%%%%%%%%%%
%%%%%%%%%%%%%%%%%%%%%%%%%%%%%%%%%%%%%%%%%%%%%%%%%%%%%%%%%%%%%%%%
%%%%%%%%%%%%%%%%%%%%%%%%%%%%%%%%%%%%%%%%%%%%%%%%%%%%%%%%%%%%%%%%%%
Since the operator $V_r= \frac{1}{N} \sum_{i<j}^{1,N}V_r^{ij}$ is
permutationally symmetric, we have from \eqref{lifting} that
$$\frac{i}{\hbar}[V_r,A]=\frac{i}{\hbar}P_S[V_r,a_{} \otimes
I^{N-p}]P_S$$
\begin{equation}
= \frac{i}{\hbar N} P_S\sum_{i<j}^{1,N}[V_r^{ij},a_{} \otimes
I^{N-p}])P_S= T_r(A)+ L_r(A),
\end{equation}
where
\begin{equation}\label{treealt}
T_r(A)
%\stackrel{\eqref{tree}}{=}
=\frac{i}{N\hbar}P_S(\sum_{i=1}^p \sum_{j=p+1}^N [V_r^{ij},a\otimes
I^{N-p}])P_S
\end{equation}
and
$$L_r(A)
%\stackrel{\eqref{loop}}{=}
=\frac{1}{N} P_S(\frac{i}{\hbar}\sum_{
%i,j=1,
i<j}^{1,p}[V_r^{ij},a \otimes I^{N-p}])P_S.$$ By symmetry we have
$$T_r(A)=\frac{p(N-p) }{N } P_S(\frac{i }{\hbar}[V_r^{p,p+1},a\otimes
I^{N-p}])P_S.$$
%$$=  \frac{N-p}{N} P_S(p(\frac{i}{\hbar}[V_r^{p,p+1},a\otimes I]) \otimes I^{N-p-1}) P_S.$$
Now, since $ P_S^{p+1}P_S=P_S P_S^{p+1}=P_S$, we have furthermore
$$T_r(A)=
%\stackrel{\eqref{liftsym}}{=}
\frac{N-p}{N} P_S(p P_S^{p+1}(\frac{i}{\hbar}[V_r^{p,p+1},a\otimes
I])P_S^{p+1} \otimes I^{N-p-1}) P_S$$
$$\stackrel{\eqref{lifting}}{=} \frac{N-p}{N} \phi_{p+1}(p P_S^{p+1}
\frac{i}{\hbar}[V_r^{p,p+1},a \otimes I]P_S^{p+1}),$$ which gives
\eqref{newtree}-\eqref{mta}. Similarly,
we find
$$L_r(A)= \frac{p(p-1)}{2N}  P_S(\frac{i}{\hbar}[V_r^{p,p-1},a]\otimes I^{N-p})P_S.$$
This, due to \eqref{lifting}, gives \eqref{newloop}. \end{proof}

\begin{remark}
%$T_r(\phi_p(a))$ are different for different representations of the
%same operator $A$, e.g., 
In general, $T_r(\phi_p(a))\neq
T_r(\phi_{q}(b))$, even if $\phi_p(a)=\phi_{q}(b)$. Thus, %a more
%precise definition would be $T_{pr}(a)$ instead of $T_r(\phi_p(a))$.
e.g. the expression $T_r A$ should be understood as $T_r\phi_p(a)$ with $A=\phi_p(a)$.
However, our abuse of notation will not cause a confusion. \DETAILS{If
$A=\phi_p(a)$, then $A=\phi_{p+1}(P_S^{p+1}(a \otimes I)
P_S^{p+1}).$ A simple analysis can show that
\begin{equation}
T_r(\phi_p(a))\neq T_r(\phi_{p+1}(P_S^{p+1}(a \otimes I)
P_S^{p+1})).
\end{equation}
In other words if a $p$-particle observable is considered a
$p+1$-particle observable, then the tree term of the commutator
changes. Therefore it is important to emphasize that $A$ is a
$p$-particle observable.}
\end{remark}

$T_r$ and $L_r$ will be called the \textit{tree and loop operators},
respectively (see \cite{FGS}). Observe that the equation
\eqref{newtree} implies that
\begin{equation}\label{pplus1}
A \in \mathcal{A}_{N,p} \implies T_r(A) \in \mathcal{A}_{N,p+1}
\end{equation}
and equation \eqref{newloop} implies that
\begin{equation}
A \in \mathcal{A}_{N,p} \implies L_r(A) \in \mathcal{A}_{N,p}.
\end{equation}

Combining equations \eqref{derivative} and \eqref{treeloop} from
above we obtain the following equation for $\Gamma_t$
\begin{equation}\label{inteq}
\Gamma_t(A)=A+ \int_0^t \Gamma_{r}(T_{r}(A))dr+ \int_0^t
\Gamma_{r}(L_{r}(A)))dr.
\end{equation}
Introducing the notation $\Gamma \equiv \Gamma_.=(\Gamma_t, t \geq
0)$ we rewrite this equation in a more compact way
\begin{equation}\label{compactway}
\Gamma=I+K \Gamma+R(\Gamma)
\end{equation}
where
\begin{equation}\label{K}
(K G)_t:= \int_0^t G_s T_s ds
\end{equation}
and
\begin{equation} \label{RG}
 R(\Gamma)_t= \int_0^t \Gamma_s L_s ds.
\end{equation}
Equation \eqref{K} defines the operator $K:G \rightarrow K G $
acting on the families $G=\{G_t \in B(\mathcal{A}_N),t \geq 0\}$.
Clearly, $K$ is a bounded operator on $C([0,T],B(\mathcal{A}_N)),
\forall T \geq 0.$

\begin{proposition}\label{Kinv}
Let $K$ be the operator defined in equation \eqref{K}. Then $I-K$
%:A_N^t \rightarrow A_N^t$
is invertible
%for all $t$
and
\begin{equation} \label{1-Kinv}
\forall A\in \mathcal{A}_{N,p},\  (I-K)^{-1}GA=\sum_{n=1}^{N-p}K^n
GA.
\end{equation}
\end{proposition}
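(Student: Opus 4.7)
The plan is to exploit the fact, already recorded in \eqref{pplus1}, that the tree operator strictly raises the ``particle index'' of an observable, combined with the vanishing of $T_r$ on $\mathcal{A}_{N,N}$ coming from the prefactor $(N-p)/N$ in \eqref{newtree}. These two facts together will make the Neumann series for $(I-K)^{-1}$ terminate after finitely many terms when applied to any input of the form $GA$ with $A \in \mathcal{A}_{N,p}$.

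First I would unfold the definition \eqref{K} by induction: a straightforward computation yields
\begin{equation*}
(K^n G)_t = \int_0^t \int_0^{s_1} \cdots \int_0^{s_{n-1}} G_{s_n}\, T_{s_n} T_{s_{n-1}} \cdots T_{s_1}\, ds_n \cdots ds_1,
\end{equation*}
so that $(K^n G)_t A$ contains exactly $n$ applications of tree operators to $A$ before $G$ is applied. Next, starting from $A \in \mathcal{A}_{N,p}$ and iterating \eqref{pplus1}, each successive $T_{s_j}$ raises the particle index by one, so
\begin{equation*}
T_{s_n} \cdots T_{s_1}\, A \;\in\; \mathcal{A}_{N,p+n}
\end{equation*}
for $p+n\le N$, while for $n = N-p+1$ the last application of $T$ is on an element of $\mathcal{A}_{N,N}$, which by \eqref{newtree} (with the prefactor $(N-N)/N=0$) gives zero. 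Consequently $K^{n}GA = 0$ for every $n \ge N-p+1$.

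With this nilpotency on the orbit of $GA$ in hand, the proof finishes by the standard telescoping identity
\begin{equation*}
(I-K)\sum_{n=0}^{N-p} K^n GA \;=\; GA - K^{N-p+1}GA \;=\; GA,
\end{equation*}
which produces the claimed inverse (modulo a harmless shift of summation index relative to the $n=1$ starting point in \eqref{1-Kinv}, which just absorbs the $GA$ term). Uniqueness of the inverse on $C([0,T], B(\mathcal{A}_N))$ follows from the boundedness of $K$ on this space, already noted above.

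The main obstacle I expect is bookkeeping rather than conceptual. Because $K$ is \emph{not} nilpotent as an operator on all of $C([0,T],B(\mathcal{A}_N))$, the termination must be stated and verified on the specific input $GA$, using the nested-integral expression above together with \eqref{pplus1} and the vanishing of $T_r$ on $\mathcal{A}_{N,N}$. Keeping the order of the operators $T_{s_j}$ straight (they act right to left on $A$) and making sure that the prefactor really does kill the $(N-p+1)$-st term are the two places where one must be most careful.
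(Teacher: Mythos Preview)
Your proposal is correct and follows essentially the same route as the paper: write $(K^nG)_t$ as a nested integral of $G_{t_n}T_{t_n}\cdots T_{t_1}$, use \eqref{pplus1} together with $T_r|_{\mathcal{A}_{N,N}}=0$ to get $T_{r_n}\cdots T_{r_1}A=0$ for $n>N-p$, and then read off the finite Neumann series via telescoping. Your observation about the index shift is apt---the displayed formula \eqref{1-Kinv} (and the corresponding line in the paper's argument) should begin at $n=0$, as the paper itself does immediately afterward when defining $\Gamma^{(H)}=\sum_{n=0}^{\infty}K^nI$.
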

\begin{proof} Let $A\in \mathcal{A}_{N,p}$.
The definition of the operator $K$ implies that
\begin{equation} \label{Kn}
(K^n G)_tA=\int_{\Delta^t_n}d^{n}t G_{t_n} T_{t_n}...T_{t_1}A,
\end{equation}
where $\int_{\Delta^t_n}d^nt=\int_0^t
dt_1\int_0^{t_1}dt_2...\int_0^{t_{n-1}}dt_n$. (Here $\Delta^t_n$ is
the $n$-symplex $0 \le t_n \le t_{n-1} \le ... \le t_1 \le t$.)

 Equations \eqref{pplus1} and $T_r(A)=0 \quad \forall A \in \mathcal{A}_{N,N}$ imply
\begin{equation}\label{treezero}
T_{r_n} T_{r_{n-1}} ... T_{r_1} A=0, \quad \forall A \in
\mathcal{A}_{N,p},\quad n > N-p.
\end{equation}
%\end{remark}
Hence, by \eqref{treezero}, $(K^n G)_tA =0$ for $n > N-p$. This
gives \begin{equation} \label{series}\sum_{n=1}^{\infty}K^n GA =
\sum_{n=1}^{N-p}K^n GA.
\end{equation} On the other hand,
$(I-K)\sum_{n=1}^{\infty}K^n GA =GA$ and $\sum_{n=1}^{\infty}K^n
(I-K)GA=GA,$ which completes the proof.
\end{proof}

\section{Approximation of $\Gamma$}\label{secAG}
Let $\Gamma^{(H)}:=\sum_{n=0}^{\infty} K^n I$, which, due to
\eqref{series}, is a finite series on $\mathcal{A}_{N,p}$.
Equivalently, we write
\begin{equation}\label{GammaH}
 \Gamma_t^{(H)}:=\sum_{n=0}^{\infty} \int_{\Delta^t_{n}}d^{n}t\ T_{t_n}...T_{t_1}.
\end{equation}
\DETAILS{Recall the notation
\begin{equation}\label{tau}
\tau:=\frac{\hbar}{8\|v\|_{\infty}}.
\end{equation}}
\begin{proposition}
For $t \leq \tau:=\frac{\hbar}{8\|v\|_{\infty}}$, we have
\begin{equation}\label{flowconv}
\|(\Gamma_t-\Gamma_t^H)\phi_p\|_{\mathcal{A}_{p} \rightarrow
\mathcal{A}_N} \leq 2^{p-2} \frac{p}{N} \frac{t}{\tau}.
%2^{p-2} \frac{3p^2}{N-p} \frac{t}{\tau}.
\end{equation}
\end{proposition}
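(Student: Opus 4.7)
The plan is to reduce the difference to a convergent series in $K$. Since $\Gamma^{(H)}$ satisfies $\Gamma^{(H)} = I + K\Gamma^{(H)}$ (by its defining series) and $\Gamma$ satisfies \eqref{compactway}, subtracting gives $(I-K)(\Gamma - \Gamma^{(H)}) = R(\Gamma)$. On $\mathcal{A}_{N,p}$, Proposition \ref{Kinv} then yields
\begin{equation*}
(\Gamma_t - \Gamma_t^{(H)})\phi_p(a) \;=\; \sum_{n=0}^{N-p}(K^n R(\Gamma))_t\phi_p(a) \;=\; \sum_{n=0}^{N-p}\int_{\Delta^t_n}\!d^n r\int_0^{r_n}\!ds\;\Gamma_s L_s T_{r_n}\!\cdots T_{r_1}\phi_p(a).
\end{equation*}
Since $\Gamma_s$ is conjugation by unitaries on $\mathcal{A}_N$, $\|\Gamma_s\|=1$, so the whole task is to control $\|L_s T_{r_n}\cdots T_{r_1}\phi_p(a)\|_{\mathcal{A}_N}$ and sum.

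Writing $\alpha:=\|v\|_\infty/\hbar$, the elementary bound $\|[V^{i,j}_r,\cdot]\|\leq 2\|v\|_\infty\|\cdot\|$ applied to \eqref{mta} and \eqref{newloop} gives $\|X_{p,r}(a)\|_{\mathcal{A}_{p+1}}\leq 2p\alpha\|a\|_{\mathcal{A}_p}$ and $\|Y_{p,r}(a)\|_{\mathcal{A}_p}\leq p(p-1)\alpha/N\,\|a\|_{\mathcal{A}_p}$. Iterating \eqref{newtree} gives
\begin{equation*}
T_{r_n}\cdots T_{r_1}\phi_p(a) \;=\; \prod_{j=0}^{n-1}\tfrac{N-p-j}{N}\,\phi_{p+n}\bigl(X_{p+n-1,r_n}\cdots X_{p,r_1}(a)\bigr),
\end{equation*}
and then using $\|\phi_q(c)\|_{\mathcal{A}_N}\leq\|c\|_{\mathcal{A}_q}$ together with $(N-p-j)/N\leq 1$ yields $\|T_{r_n}\cdots T_{r_1}\phi_p(a)\|_{\mathcal{A}_N}\leq (2\alpha)^n p(p+1)\cdots(p+n-1)\|a\|_{\mathcal{A}_p}$. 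Appending $L_s$ multiplies by the loop factor $(p+n)(p+n-1)\alpha/N$; integrating against the volume $t^{n+1}/(n+1)!$ of the $(n+1)$-simplex $\{0\leq s\leq r_n\leq\cdots\leq r_1\leq t\}$ and applying the algebraic identity $(p+n)(p+n-1)(p+n-1)!/((n+1)!(p-1)!) = (p+n-1)\binom{p+n}{n+1}$ gives
\begin{equation*}
\|(K^n R(\Gamma))_t\phi_p(a)\|_{\mathcal{A}_N}\;\leq\;\frac{\alpha t}{N}\,(2\alpha t)^n\,(p+n-1)\binom{p+n}{n+1}\|a\|_{\mathcal{A}_p}.
\end{equation*}

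It remains to sum over $n$. The crude estimate $\binom{p+n}{n+1}\leq 2^{p+n}$ pulls out a $2^p$ and reduces the tail to $2^p\sum_n(4\alpha t)^n(p+n-1)$. For $t\leq\tau=1/(8\alpha)$ one has $4\alpha t\leq 1/2$, and the geometric calculation $\sum_n (1/2)^n(p+n-1) = 2(p-1)+2 = 2p$ bounds the sum uniformly. Combined with $\alpha t = t/(8\tau)$ this delivers $\frac{\alpha t}{N}\cdot 2^{p+1}p = (t/\tau)\cdot 2^{p-2}p/N$, the claimed bound. The main obstacle beyond tracking the iterated tree operators is the combinatorial repackaging in the final step: folding the factorials into a single binomial coefficient and then using $\binom{p+n}{n+1}\leq 2^{p+n}$ is what yields the exponential-in-$p$ constant with only one linear factor of $p$, since a more naive estimate of the same sum would introduce an extra factor of $p$.
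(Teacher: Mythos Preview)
Your proof is correct and follows essentially the same approach as the paper: both express $\Gamma-\Gamma^H$ as the Neumann series $\sum_n K^n R(\Gamma)$, unwind each term via the tree/loop factorization, apply the operator bounds on $X_{p,r}$ and $Y_{p,r}$, integrate over the simplex, invoke $\binom{p+n}{n+1}\leq 2^{p+n}$, and close with the identity $\sum_{n\ge 0}2^{-n}(p+n-1)=2p$. The only differences are cosmetic (you use $\alpha=\|v\|_\infty/\hbar$ rather than working in $\tau$-units throughout, and you drop the harmless factor $(N-p)!/((N-p-n)!N^n)\le 1$ slightly earlier).
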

\begin{proof}
 Using Proposition \ref{Kinv} and equation
\eqref{compactway} we obtain that
\begin{equation}\label{Gdiff}
\Gamma-\Gamma^H=(I-K)^{-1}R(\Gamma)=\sum_{n=0}^{\infty}K^n
R(\Gamma).
\end{equation}
Using equations \eqref{RG} and \eqref{Kn}, we find that
 $$(K^n R (\Gamma))_t = \int_{\Delta^t_{n+1}}d^{n+1}t\ \Gamma_{t_{n+1}}L_{t_{n+1}}
T_{t_n}...T_{t_1}.$$ Using equation \eqref{newtree} and
\eqref{newloop}, we obtain that
\begin{equation}\label{mutlitreeloop}
L_{t_{n+1}}T_{t_n}...T_{t_1}
\phi_p(a)$$$$=\phi_{p+n}(\frac{(N-p)!}{(N-p-n)!N^n}Y_{p+n,t_{n+1}}X_{p+n-1,t_n}X_{p+n-2,t_{n-1}}...X_{p,t_1}(a)).
\end{equation}
Using equations \eqref{mta}, $\|P_S^M\|=1$ and
%\begin{equation}\label{2bodypotest}
$\|V^{ij}\|_{B(L^2(\mathbb{R}^6))}= \|v\|_{\infty}$,
%\end{equation}
we derive the estimate on the tree and loop operators %used below:
\begin{equation}\label{mtaest}
\|X_{p,t}(a)\|_{\mathcal{A}_{p+1}}\leq \frac{2
\|v\|_{\infty}}{\hbar} p \|a\|_{\mathcal{A}_p},
\end{equation}
\begin{equation}\label{loopnorm}||Y_{p,t}(a)||_{ \mathcal{A}_{p}} \leq
\frac{\|v\|_{\infty}}{\hbar}\frac{p(p-1)}{N}\|a
\|_{\mathcal{A}_{p}}.\end{equation} Equations \eqref{mtaest},
\eqref{loopnorm}, \eqref{mutlitreeloop} and
%\eqref{Aphia}
%\begin{equation}\label{Aphia}
$\|\phi_p(a)\|_{\mathcal{A}_{N,p}} \le \|a\|_{\mathcal{A}_p}$
%\end{equation} 
%for all $a \in \mathcal{A}_p$ 
(since $\|P_S\|=1$) imply that
\begin{equation}
\|L_{t_{n+1}}T_{t_n}...T_{t_1} \phi_p\|_{\mathcal{A}_p \rightarrow
\mathcal{A}_{N,p+n}} $$$$ \leq \left(\frac{2
\|v\|_{\infty}}{\hbar}\right)^n \frac{(N-p)!}{(N-p-n)!N^n}
\frac{(p+n)!}{(p-1)!}\frac{(p+n-1)}{2N}\|a\|_{\mathcal{A}_p},
\end{equation}
which together with the fact that
$\|\Gamma_t\|_{\mathcal{A}_{N,p}\rightarrow \mathcal{A}_N} = 1$ and
the equality
\begin{equation}\label{Vo}
\int_{\Delta^t_{n}}d^{n}t\ \equiv \int_0^t dt_1
\int_0^{t_1}dt_2...\int_0^{t_{n-1}} dt_n=\frac{t^n}{n!},
\end{equation}
implies that
\begin{equation}\label{Knest}
||(K^n R (\Gamma))_t \phi_p||_{\mathcal{A}_{p} \rightarrow
\mathcal{A}_N}$$$$ \leq \frac{1}{2}
\left(\frac{t}{4\tau}\right)^{n+1} \frac{(N-p)!}{(N-p-n)!N^{n}}
\frac{(p+n)!}{(p-1)!(n+1)!} \frac{p+n-1}{N}.
%\|G\|_{\mathcal{A}_{N,p+n} \rightarrow A_N^t}
\end{equation}
Furthermore, using the inequalities
\begin{equation}\label{2nbnd}\frac{(p+n)!}{(p-1)!(n+1)!} \leq
2^{p+n}
\end{equation}
and
\begin{equation}\label{N-pbnd}\frac{(N-p)!}{(N-p-n)!N^n} \leq 1,
\end{equation}
we simplify \eqref{Knest} as
\begin{equation}\label{Knr}
\|(K^n R (\Gamma))_t \phi_p\|_{\mathcal{A}_{N,p} \rightarrow
\mathcal{A}_N} \le 2^{p-2}\left(\frac{t}{2 \tau}\right)^{n+1}
 \frac{p+n-1}{N}.
\end{equation}
To conclude our calculations we use the following equality:
\begin{equation}\label{numberseq1}
\sum_{n=0}^{\infty} 2^{-n} (p+n-1) =2p.
\end{equation}
To derive \eqref{numberseq1}, let
%\begin{equation}\label{alpha1}
$A=\sum_{n=0}^{\infty}2^{-n}(p+n-1).$
%\end{equation}
 Then
% \begin{equation}\label{alpha2}
$\frac{A}{2}=\sum_{n=0}^{\infty}2^{-n-1}(p+n-1)=\sum_{n=1}^{\infty}2^{-n}(p+n-2),$
%\end{equation}
and therefore subtracting the latter equation %\eqref{alpha1}, and \eqref{alpha2}
from the former, we obtain that
$\frac{A}{2}=(p-1)+\sum_{n=1}^{\infty}2^{-n}=p$, which gives
immediately equation \eqref{numberseq1}.
%\end{proof}
 Equation \eqref{numberseq1}
 yields for $t\leq \tau$ that
\begin{equation}
\sum_{n=0}^{N-p-1} \|(K^n R (\Gamma))_t \phi_p\|_{\mathcal{A}_{N,p}
\rightarrow \mathcal{A}_N} \leq 2^{p-2}\frac{p}{N}
%\frac{3p^2}{N-p}
\frac{t}{\tau}.
\end{equation}
This together with \eqref{Gdiff} and \eqref{series} proves equation
\eqref{flowconv} for $t\leq \tau$.
\end{proof}

%%%%%%%%%%%%%%%%%%%%%%%%%%%%%%%%%%%%%%%%%%%%%%%%%%%%%%%%%%
%%%%%%%%%%%%%%%%%%%%%%%%%%%%%%%%%%%%%%%%%%%%%%%%%%%%%%%%%%%
%%%%%%%%%%%%%%%%%%%%%%%%%%%%%%%%%%%%%%%%%%%%%%%%%%%%%
\section{Classical Field Theory for the Hartree-von Neumann Equation}\label{secCFT}
 In this section we develop Hamiltonian and Liouvillian
representations for the Hartree equation viewed as a classical field
theory. We begin with key definitions. Let $\mathbb{I}_1$ denote the
space of all positive, trace class operators on $L^2(\mathbb{R}^3)$.
 For any %$p$-particle quantum observable $A=\phi_p(a),\$
operator $ a \in \mathcal{A}_{p}$ %, given by \eqref{lifting}, 
we define the $p$-particle classical field
observable $A^c: \mathbb{I}_1 \rightarrow \mathbb{C}$ by the
equation
\begin{equation}\label{clobs}
 A^c(\rho):=  %a(\rho),\ \mbox{with}\ a(\rho):=
Tr(a \rho^{\otimes p}).
\end{equation}
We equip the space of these functionals with the norm
 \begin{equation}\label{clnorm}
||A^c||:=sup_{\rho \in \mathbb{I}_1, Tr\rho =1}|A^c(\rho)|.
\end{equation}
The last two equations imply that
\begin{equation}\label{claqua}
\|A^c\| \leq \|a\|_{\mathcal{A}_p}.
\end{equation}

%\begin{definition}
For a functional $A^c(\rho)$ we define the operator (Fr\'echet
derivative) $\partial_{\rho}A^c(\rho)$ by the equation
$Tr(\partial_{\rho}A^c(\rho)\xi)= \partial_s A^c(\rho+
s\xi)|_{s=0}.$ On the space of classical field observables we define
the Poisson bracket by
\begin{equation}\label{poisson}
\{A^c(\rho), B^c(\rho)\}=-\frac{i}{\hbar}Tr
\left(\partial_{\rho}A^c(\rho)\rho
\partial_{\rho} B^c(\rho)-\partial_{\rho}B^c(\rho)\rho \partial_{\rho}A^c(\rho)\right).
\end{equation}
%\end{definition}
The Jacobi identity is proven in Appendix \ref{app:Jacobi}. Note that if $A^c$ is a $p$-particle classical field observable, and
$B^c$ is a $q$-particle classical field observable, then
$\{A^c,B^c\}$ is a $p+q-1$-particle classical field observable.

Furthermore, we observe that
\begin{equation}\label{poissonRhoPsi}
\{A^c(\rho),
B^c(\rho)\}|_{\rho=P_{\psi}}=\frac{i}{\hbar}\int(\partial_{\psi(x)}A^c\partial_{\overline{\psi}(x)}
B^c-\partial_{\overline{\psi}(x)}A^c\partial_{\psi(x)}B^c)(\psi,\overline{\psi})dx,
\end{equation}
where $P_{\psi}$ is the rank-one projection on the vector $\psi$ and
$A^c(\psi,\overline{\psi}):= A^c(P_{\psi})$. The r.h.s. is the
standard Poisson bracket for the Hartree equation (see \cite{FGS}).
Indeed, $\partial_{\rho}A^c(\rho)$ and $\partial_{\rho}B^c(\rho)$
are operators on $L^2(\mathbb{R}^3)$ ($1-$particle observables) and
therefore
$$Tr((\partial_{\rho}B^c)(P_{\psi})P_{\psi}(\partial_{\rho}A^c)(P_{\psi}))
= \langle (\partial_{\rho}A^c)(P_{\psi})^* \psi,
(\partial_{\rho}B^c)(P_{\psi})\psi \rangle .$$ Since, as it is easy
to see, $(\partial_{\rho}B^c)(P_{\psi}) \psi =
\partial_{\overline{\psi}(x)} B^c(\psi,\overline{\psi})$ and $\overline{(\partial_{\rho}A^c)(P_{\psi})^* \psi} =
\partial_{\psi(x)} A^c(\psi,\overline{\psi})$, this gives
$Tr((\partial_{\rho}B^c)(P_{\psi})P_{\psi}(\partial_{\rho}A^c)(P_{\psi}))
=  \int\partial_{\psi(x)}A^c\partial_{\overline{\psi}(x)} B^c,$
which implies the desired relation.

 Introduce the classical Hamiltonian functional
$$\qquad H^{c}(\rho):=H^{0,c}(\rho)
+V^{c}(\rho),$$ where  (extending \eqref{clobs} to the unbounded 1-particle observable $h$)
\begin{equation} \label{H0c}
H^{0,c}(\rho):=Tr
( %(-\frac{\hbar^2}{2}\Delta)
h \rho),\ \mbox{and}\
V^{c}(\rho):=\frac{1}{2}Tr(v \rho^{\otimes 2}).\end{equation}
%It is easy to see that $H^{cl},H^{0,cl},V^{cl}$, are .
Note that the Hartree-von Neumann equation \eqref{Hartree-von
Neumann} is equivalent to the equation \begin{equation} \label{HvN}
\partial_t \rho = \{H^{c}(\rho),
\rho\},
\end{equation} which motivates the above definition of the Poisson bracket.
\begin{remark}
The last equation holds in the weak sense that for all $ a \in \mathcal{A}_{1}$,
%$ \forall \gamma \in \mathbb{I}_1,$
\begin{equation}\label{weak1}
\partial_t Tr(a \rho )=Tr(a\ \{H^c, \rho\} ). %L^2(\mathbb{R}^3)
\end{equation}
Using the linearity of the Poisson brackets in the second factor, we
obtain
\begin{equation}\label{weak2}
Tr(a \{H^c, \rho\}) = \{H^c, Tr(a \rho )\}.
\end{equation}
Thus equation \eqref{HvN} is equivalent to the equation
\begin{equation}
\partial_t Tr(a \rho )=\{H^c, Tr(a \rho )\},
\end{equation}
or $\partial_t a^c(\rho)=\{H^c(\rho),a^c(\rho)\}$ for all 1
particle observables $a$. 
\end{remark}

 To show \eqref{HvN} we use the definition of the
Poisson bracket and the relation $Tr(\partial_{\rho}\rho\xi)=
\xi$, which follows from the
definition of $\partial_{\rho}A^c(\rho)$ above to obtain
\begin{equation} \label{PBD} \{H^{c}, \rho\}=
-\frac{i}{\hbar}\left(\partial_{\rho}H^{c}(\rho)\rho-\rho
\partial_{\rho}H^{c}(\rho) \right).
\end{equation} Next, computing
$\partial_{\rho}H^{c}(\rho)$, %using the definition of %$\partial_{\rho}A^c(\rho)$,
 we conclude that $\{H^c,\rho\}$ is
equal to the r.h.s. of \eqref{Hartree-von Neumann}.

Let $\Phi_t$ be the flow given by the Hartree-von Neumann initial
value problem \eqref{Hartree-von Neumann}, i.e.
$\Phi_t(\rho_0):=\rho_t$ where $\rho_t$ is the solution of
\eqref{Hartree-von Neumann} at time $t$. We denote by $\Phi_t^0$ the
flow of \eqref{Hartree-von Neumann} for $v=0$ (the free flow). We
define the Hartree-von Neumann evolution on the space of
$p$-particle classical field observables by
\begin{equation}\label{clevol}
U_t(A^c(\rho)):=A^c(\Phi_t(\rho)),
\end{equation}
and the free Hartree-von Neumann evolution by
%\begin{equation}
$U_t^0(A^c(\rho)):=A^c(\Phi_t^0(\rho)).$
%\end{equation}
In a standard way we derive the following equation
\begin{equation}\label{Har}
\partial_tU_t(A^c(\rho))=%\frac{1}{\hbar}
U_t(\{H^{cl},A^c\}(\rho))
\end{equation}
(and similarly for $U_t^0(A^c(\rho))$).
\DETAILS{\begin{equation}\label{Harfree}
\partial_tU_t^0(A^c(\rho))=%\frac{1}{\hbar}
U_t^0(\{H^{0,cl},A^c\}(\rho)).
\end{equation}}
%\end{lemma}
\DETAILS{\begin{proof} Equation \eqref{Harfree} is immediate
consequence of equation \eqref{Har} by setting $v=0$. The proof of
equation \eqref{Har} follows from the equation
%\begin{equation}
%\partial_t
%\Phi_t(\rho)=-\frac{i}{\hbar}(\partial_{\rho}H^{cl})(\Phi_t(\rho)\
%\end{equation}
$\partial_t \Phi_t(\rho)= \{H^{cl}(\Phi_t(\rho)),\Phi_t(\rho)\},$
%-\frac{i}{\hbar}(\partial_{\rho}H^{cl})(\Phi_t(\rho))$
which follows from \eqref{HvN},\eqref{PBD} the definition of
$A^c(\rho)$ and standard properties of the Poisson bracket.
\end{proof}}

Let $V_t^{c}$ denote the free evolution, $U_t^0(V^{c})$, of the
2-particle classical observable $V^{c}$. A simple computation gives
that
\begin{equation}\label{clpot}
V_t^{c}(\rho):=\frac{1}{2} Tr (v_t \rho^{\otimes 2}),
\end{equation}
where $v_t$ is the operator $\psi (x_1, x_2) \rightarrow
%e^{-i\frac{\hbar}{2}(\Delta_1+\Delta_2)t}v(x_1-x_2)
%e^{i\frac{\hbar}{2}(\Delta_1+\Delta_2) t}
e^{\frac{i}{\hbar}(h_{x_1}+h_{x_2})t}v(x_1-x_2)
e^{-\frac{i}{\hbar}(h_{x_1}+h_{x_2}) t}\psi (x_1,
x_2)$.

In what follows we denote the action of Poisson bracket as
\begin{equation}\label{Pvac}
P_V A^c:= \{V,A^c\}.
\end{equation}

\begin{proposition}
%Let $A=\phi_p(a)$  
We have the following expansion:
\begin{equation}\label{Hartreeflow}
U_t(A^c)=\sum_{n=0}^{\infty}A^c_{t,n},
\end{equation}
where $A^c_{t,0}=A_t^c:=U_t^0(A^c)$  and, for $n \ge 1$,
\begin{equation}\label{atn}
A^c_{t,n}=\int_{\Delta^t_{n}}d^{n}t\
P_{V_{t_n}^{c}}...P_{V_{t_1}^{c}}A_t^c,
\end{equation}
%the terms on the r.h.s. satify
with the following the estimates
\begin{equation}\label{Voltera}
||A_{t,n}^c|| \leq
%\left(\frac{4 t ||v||_{\infty}}{\hbar}\right)^n
%2^{p-1} \|a\|\stackrel{\eqref{tau}}{=}
\left(\frac{t}{2 \tau}\right)^n 2^{p-1} \|a\|_{\mathcal{A}_{p}}
\end{equation}
(in particular, for $t \le \tau$ the series converges in the norm \eqref{clnorm}).
\end{proposition}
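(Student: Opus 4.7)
The plan is to produce a Dyson series for $U_t(A^c)$ in the interaction picture and then to control each term by identifying the Poisson-bracket action with the tree-operator action $X_{p,s}$ of Section \ref{secG}, whose norm is already estimated in \eqref{mtaest}.

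First I would work in the interaction picture. Set $W_t := U_t \circ U_{-t}^0$, where $U_{-t}^0$ is the inverse of the free classical evolution. Direct differentiation gives $\partial_t U_t = U_t L_{H^c}$ and $\partial_t U_{-t}^0 = -U_{-t}^0 L_{H^{0,c}}$, where I abbreviate $L_F := \{F,\cdot\}$. Subtracting and using the invariance of the Poisson bracket under any Hamiltonian flow, so that $U_t^0\,L_{V^c}\,U_{-t}^0 = L_{V^c_t} = P_{V^c_t}$, one obtains the time-dependent generator equation
\begin{equation*}
\partial_t W_t = W_t P_{V^c_t}.
\end{equation*}
Iterating the Volterra equation $W_t = I + \int_0^t W_s P_{V^c_s}\,ds$ produces the series
\begin{equation*}
W_t = \sum_{n=0}^{\infty}\int_{\Delta^t_n} P_{V^c_{t_n}}\cdots P_{V^c_{t_1}}\,d^n t,
\end{equation*}
and \eqref{Hartreeflow}--\eqref{atn} follow by applying this to $A^c_t = U_t^0(A^c)$ and using $U_t(A^c) = W_t(A^c_t)$.

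The main analytic step is to estimate one application of $P_{V^c_s}$ on a $q$-particle classical field observable. For $B^c(\rho) = Tr(b\,\rho^{\otimes q})$ with $b\in\mathcal{A}_q$ taken symmetric without loss of generality, I would compute $\partial_\rho B^c(\rho)$ explicitly as $q$ times the partial trace of $b$ over $q-1$ factors against $\rho^{\otimes(q-1)}$, and likewise for $\partial_\rho V^c_s$. Substituting into \eqref{poisson} identifies $P_{V^c_s}B^c = \{V^c_s,B^c\}$ with the $(q+1)$-particle classical field observable whose symbol is, up to the symmetrizing factors $P_S^{q+1}$, precisely $X_{q,s}(b)$ as defined in \eqref{mta}. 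The a priori estimate \eqref{mtaest} and $\|b^c\|\le \|b\|_{\mathcal{A}_q}$ then deliver
\begin{equation*}
\|\text{symbol of } P_{V^c_s}B^c\|_{\mathcal{A}_{q+1}} \le \frac{q}{4\tau}\,\|b\|_{\mathcal{A}_q}.
\end{equation*}

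Iterating this bound on $P_{V^c_{t_n}}\cdots P_{V^c_{t_1}}A^c_t$, and noting that the symbol of $A^c_t$ is unitarily conjugate to $a$ and hence has the same $\mathcal{A}_p$-norm, yields an upper bound of $(4\tau)^{-n}\,p(p+1)\cdots(p+n-1)\,\|a\|_{\mathcal{A}_p}$ on the symbol norm. Combining with the simplex volume $\int_{\Delta^t_n}d^nt = t^n/n!$, the inequality \eqref{claqua}, and the elementary bound $\binom{p+n-1}{n} \le 2^{p+n-1}$, one arrives at
\begin{equation*}
\|A^c_{t,n}\| \le \left(\frac{t}{4\tau}\right)^n \binom{p+n-1}{n}\|a\|_{\mathcal{A}_p} \le \left(\frac{t}{2\tau}\right)^n 2^{p-1}\|a\|_{\mathcal{A}_p},
\end{equation*}
from which the geometric convergence of the series for $t\le \tau$ is immediate. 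The main obstacle is the algebraic identification of $P_{V^c_s}$ with $X_{q,s}$ at the level of symbols (the \emph{symbolic representation} referred to in the section's introduction); it requires careful tracking of partial traces and symmetrizers, after which the rest is routine bookkeeping around \eqref{mtaest} and a binomial estimate.
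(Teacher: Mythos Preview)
Your proposal is correct and follows essentially the same route as the paper: an interaction-picture Dyson series followed by the identification of $P_{V^c_s}$ with the tree operator $X_{q,s}$ at the level of symbols (the paper states this as the lemma $Tr(X_{p,t}(a)\rho^{\otimes p+1})=P_{V^c_t}a(\rho)$), then the iteration of \eqref{mtaest}, the simplex volume \eqref{Vo}, and the binomial bound \eqref{2nbnd}. The only cosmetic difference is that you take $W_t=U_t\circ U_{-t}^0$ and obtain $\partial_t W_t=W_t P_{V^c_t}$ directly, whereas the paper works with $\tilde A^c_t=U_{-t}^0(U_t(A^c))$, gets $\partial_t\tilde A^c_t=P_{V^c_{-t}}\tilde A^c_t$, and then applies $U_t^0$ with a change of variables to reach \eqref{atn}; your ordering saves that last step but is otherwise the same computation.
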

\begin{proof}
Define $\tilde{A}_t^c:=U_{-t}^0(U_{t}(A^c))$. By a standard argument
we have that
%\begin{equation}\label{Hartreeder}
$\partial_t \tilde{A}_t^c= P_{V_{-t}^{c}} \tilde{A_t^c}.$
%\end{equation}
\DETAILS{Indeed, using $H^c=H^c_0 + V^c$,  the Leibnitz rule and the
fact that $U_t^0$ is a linear operator, we obtain that
 $$\partial_t \tilde{A}_t^c=\partial_{t} U_{-t}^0 (U_t (A^c)) =
(\partial_t U_{-t}^0) (U_t (A^c))+ U_{-t}^0 (\partial_t U_t (A^c))$$
$$=-%\frac{1}{\hbar}
U_{-t}^0(\{H^{0,c},U_t(A^c)\})+%\frac{1}{\hbar}
U_{-t}^0 (U_t(\{H^{c},A^c\}))$$
$$=%\frac{1}{\hbar}
[-U_{-t}^0 (\{H^{0,c},U_t(A^c)\})+U_{-t}^0(\{H^{c},U_t (A^c)\})]$$
$$=%\frac{1}{\hbar}
U_{-t}^0 (\{V^{c},U_t(A^c)\})=%\frac{1}{\hbar}
\{V_{-t}^{c},\tilde{A}_t^c\} \stackrel{\eqref{Pvac}}{=}
P_{V_{-t}^{c}}\tilde{A}_t^c.$$
%  In the second equality we used.
In the third equality we used Lemma \ref{Brackder}, and in the
fourth one, the fact that $U_{-t}^0(H^{0,cl})=H^{0,cl}$ (invariance
of the free Hamiltonian under the free flow). Also we have used the
fact that $U_t(\{A^c,B^c\})=\{U_t(A^c),U_t(B^c))\}$ for any
classical field observables $A^c$ and $B^c$ (respectively for
$U_{-t}^0$).} Integrating this equation, we obtain immediately that
 \begin{equation}
\tilde{A}_t^c=A^c+ \int_0^t dt_1\ P_{V_{-t_1}^{c}}
\tilde{A}_{t_1}^c.
\end{equation}
 Iterating this  equation and applying $U_t^0$ to the result
 we obtain \eqref{Hartreeflow},
%\begin{equation}\label{Hartreeseries}
%U_t(A^c)=\sum_{n=0}^{\infty} A_{t,n}^c,
%\end{equation}
with $A^{c}_{t,0}:=U_t^0(A^c)$ and
%\begin{equation}
$A_{t,n}^c:= \int_0^t dt_1 \int_0^{t_1}dt_2 \int_0^{t_{n-1}} dt_n
U_t^0(P_{V_{-t_1}^{c}}...P_{V_{-t_n}^{c}}A^c),\ n \ge 1,$
%\end{equation}
which after a change of variables of integration gives \eqref{atn}.

It remains to prove \eqref{Voltera}, which shows that the series \eqref{Hartreeflow} converges and
which we are going to prove next.
%The proof of convergence consists of three lemmas.
Recall the
notation $\rho^{\otimes N}=\rho \otimes ... \otimes \rho,$ the $\ N-$fold tensor product. %For
%an observable $B \in \mathcal{A}_m$ and a function $\Phi \in
%L_S^2(\mathbb{R}^{3m})$ we use the notation
%\begin{equation}
%\langle B \rangle_{\Phi}:=\langle \Phi,B \Phi \rangle.
%\end{equation}
 \begin{lemma} Let $X_{m,t}$ be the operator defined in equation
 \eqref{mta}. Then
 \begin{equation}\label{Bracketcommut1}
Tr ( X_{p,t}(a)\rho^{\otimes p+1})= P_{V_{t}^{c}} a (\rho).
\end{equation}
\end{lemma}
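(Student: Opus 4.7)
The plan is to unfold both sides as full traces on $L^2(\mathbb{R}^3)^{\otimes(p+1)}$ and match them. For the left-hand side, $\rho^{\otimes(p+1)}$ is permutation-symmetric, so $P_S^{p+1}$ commutes with it; combined with the idempotency of $P_S^{p+1}$ and cyclicity of the trace, the definition \eqref{mta} collapses to
\[
\mathrm{Tr}\bigl(X_{p,t}(a)\,\rho^{\otimes(p+1)}\bigr) = \frac{ip}{\hbar}\,\mathrm{Tr}\bigl([V_t^{p,p+1},\, a\otimes I^1]\,\rho^{\otimes(p+1)}\bigr).
\]

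For the right-hand side, recall that $P_{V_t^c}a(\rho)=\{V_t^c,a^c\}(\rho)$ by \eqref{Pvac}, so I compute the Fr\'echet derivatives entering the Poisson bracket \eqref{poisson}. Differentiating $a^c(\rho)=\mathrm{Tr}(a\rho^{\otimes p})$ produces $p$ product-rule terms; the permutation symmetry of $a$ (an operator on $L_S^2$) makes all of them coincide, giving $\partial_{\rho}a^c(\rho)=p\,D$ with the one-particle operator
\[
D := \mathrm{Tr}_{1,\ldots,p-1}\bigl(a\,(\rho^{\otimes(p-1)}\otimes I)\bigr).
\]
Analogously, the symmetry of $v_t$ (inherited from $v(x)=v(-x)$ and the product form of the free evolution) yields $\partial_{\rho}V_t^c(\rho)=C:=\mathrm{Tr}_1\bigl(v_t\,(\rho\otimes I)\bigr)$. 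A short cyclicity manipulation on the one-particle trace then turns \eqref{poisson} into
\[
\{V_t^c,a^c\}(\rho) = \frac{ip}{\hbar}\,\mathrm{Tr}\bigl([C,D]\,\rho\bigr).
\]

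The remaining work is to promote this one-particle trace to the $(p+1)$-particle trace on the left, and I would do it in two stages. The partial-trace identity $\mathrm{Tr}(CM)=\mathrm{Tr}\bigl(v_t(\rho\otimes M)\bigr)$, valid for every one-particle operator $M$ and immediate from the definition of $C$, applied to $M=D\rho$ and $M=\rho D$ gives
\[
\mathrm{Tr}\bigl([C,D]\,\rho\bigr) = \mathrm{Tr}\bigl([v_t,\, I\otimes D]\,\rho^{\otimes 2}\bigr).
\]
Next, since $V_t^{p,p+1}$ acts trivially on coordinates $1,\ldots,p-1$, it commutes with $\rho^{\otimes(p-1)}$ and with the partial trace $\mathrm{Tr}_{1,\ldots,p-1}$; the defining formula for $D$ therefore yields
\[
[V_t^{p,p+1},\, D\otimes I^1] = \mathrm{Tr}_{1,\ldots,p-1}\bigl([V_t^{p,p+1},\, a\otimes I^1]\,(\rho^{\otimes(p-1)}\otimes I^2)\bigr).
\]
Tracing this identity against $\rho\otimes\rho$ on coordinates $p,p+1$ reassembles $\rho^{\otimes(p-1)}\otimes\rho\otimes\rho=\rho^{\otimes(p+1)}$ and reproduces the reduced left-hand side, completing the identification.

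The only real obstacle is the bookkeeping: keeping careful track of which coordinates each operator and each partial trace acts on, and using repeatedly that operators supported on disjoint coordinates commute with each other and with the corresponding partial traces. No analytic difficulty arises since $v\in L^\infty$ and $\rho$ is trace class; once the Fr\'echet derivatives have been correctly identified, the argument is a routine multilinear-algebra computation.
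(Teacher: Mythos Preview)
Your proof is correct and follows essentially the same approach as the paper: both compute the Fr\'echet derivative $\partial_\rho a^c(\rho)=p\,\mathrm{Tr}_{1,\dots,p-1}(a(\rho^{\otimes(p-1)}\otimes I))$ and then unwind the two sides by partial-trace manipulations (the paper's published proof merely says ``a simple computation'' after stating this derivative). One small bookkeeping point: your intermediate expression $\mathrm{Tr}([v_t,\,I\otimes D]\rho^{\otimes 2})$ differs from the $\mathrm{Tr}([V_t^{p,p+1},\,D\otimes I^1]\rho^{\otimes 2})$ you need by a coordinate swap, which is harmless precisely because $v_t$ is symmetric (from $v$ even)---you mention this symmetry when computing $\partial_\rho V_t^c$, but it would be cleaner to invoke it again at the matching step.
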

\begin{proof} \DETAILS{Using \eqref{clpot} and the fact that %we assumed that
$v$ is even (which implies symmetry with respect to permutations of
the particle coordinates) we have that
$$Tr \partial_{\rho} A^c \rho \partial_{\rho}
V_t^{cl} $$ $$=p \int a(x_1,...x_p;y_1,...,y_{p})\rho(y_1;
x_1)...\rho(y_{p-1}; x_{p-1})\rho(y_{p}; z_{p}) $$ $$\times
v_t(z_{p},x_{p+1}; x_p,y_{p+1}) \rho(y_{p+1}; x_{p+1})
dx_{1,p+1}dy_{1,p+1} dz_p,$$ where $dr_{k,l}$ denotes $dr_k
dr_{k+1}...dr_l$. Relabeling the variables of integration as $z_p
\rightarrow x_p$ and $x_p \rightarrow z_p$, we obtain
\begin{equation}\label{firstcomterm}
Tr \partial_{\rho} A^c \rho \partial_{\rho} V_t^{cl}=p
(V_t^{p,p+1}(a\otimes I) )^c .
\end{equation}
Similarly, we have that
\begin{equation}\label{secondcomterm}
Tr \partial_{\rho} V_t^{cl}\rho \partial_{\rho} A^c  =p((a \otimes
I) V_t^{p,p+1})^c.
\end{equation}
Equations \eqref{mta},\eqref{poisson},\eqref{Pvac},
\eqref{firstcomterm},
   and \eqref{secondcomterm} give equation
\eqref{Bracketcommut1}.}

The proof follows from the relation $\partial_{\rho}A^c(\rho)=pTr_{p-1}(a \rho^{p-1}\otimes I)$,
where $Tr_{p-1}$ is the partial trace over the first $p-1$
coordinates, and a simple computation. 
 %, and therefore equation \eqref{multibrac}
%is proven for $n=1$. Assume that it is proven for $n$. We will prove
%it for $n+1$. Indeed,
%$$\langle X_{p+n,t_{n+1}} X_{p+n-1,t_n}...X_{p,t_1}(a) \rangle_{\psi_0^{\otimes p+n+1}}$$
%$$\stackrel{\eqref{Bracketcommut1}}{=} P_{V_{t_{n+1}^{cl}}}\langle X_{p+n-1,t_n}...X_{p,t_1}(a) \rangle_{\psi_0^{\otimes p+n}}$$
%$$\stackrel{\eqref{multibrac}}{=}  P_{V_{t_{n+1}}^{cl}} P_{V_{t_n}^{cl}}...P_{V_{t_1}^{cl}} A^c,$$
%therefore \eqref{multibrac} is true for all $n \in \mathbb{N}$.
\end{proof}
%Note that equation \eqref{Bracketcommut1} is a special case of the
%general Lemma \ref{Brco} of Supplement. % \ref{further}.

Iterating \eqref{Bracketcommut1} %and replacing $a$ by
%$a_t=e^{\frac{i H_p^0 t}{\hbar}} a e^{-\frac{i H_p^0 t}{\hbar}}$
we obtain equation
\begin{equation}\label{multibrac}
Tr (  X_{p+n-1,t_n}...X_{p,t_1}(a) \rho^{\otimes p+n})=
P_{V_{t_n}^{c}}...P_{V_{t_1}^{c}}a (\rho).
\end{equation}
%\begin{lemma} Let $A=\phi_p(a) \in \mathcal{A}_{N, p}$ with $a \in %\mathcal{A}_p$.  Then
%
%\end{lemma}
%\begin{proof}
%
%\end{proof}
%
\DETAILS{\begin{lemma} For any $p$-particle observable $A=\phi_p(a)$
we have that
\begin{equation}\label{Voltera}
||A_{t,n}^c|| \leq
%\left(\frac{4 t ||v||_{\infty}}{\hbar}\right)^n
%2^{p-1} \|a\|\stackrel{\eqref{tau}}{=}
\left(\frac{t}{2 \tau}\right)^n 2^{p-1} \|a\|_{\mathcal{A}_{p}}.
\end{equation}
\end{lemma}
\begin{proof}} Next, using equation \eqref{mtaest} we obtain that
 \begin{equation}\label{mulcoest}
\|X_{p+n-1,t_n}...X_{p,t_1}(a)\|_{\mathcal{A}_{p+n}} \leq \left(
\frac{2\|v\|_{\infty}}{\hbar} \right)^n \frac{(p+n-1)!} {(p-1)!}
\|a\|_{\mathcal{A}_p}.
 \end{equation}
The last two equations together with %\eqref{multibrac},
\eqref{clnorm}, the formula $A_t=\phi_p(a_t)$, where  $a_t=e^{\frac{i H_p^0
t}{\hbar}} a e^{-\frac{i H_p^0 t}{\hbar}}$, 
%Therefore, for all $\rho$ %such %that $Tr \rho=1,$ we have  by
and the isometry of the free evolution, $a_t$,
 \DETAILS{$$|P_{V_{t_n}^{cl}}...P_{V_{t_1}^{cl}}A_t^c(\rho)|
 \stackrel{\eqref{multibrac}}{=}|Tr (   X_{p+n-1,t_n}...X_{p,t_1}(a_t) \rho^{\otimes (p+n)})| $$
 $$%\stackrel{\eqref{convi}}{\leq}
 \le \|X_{p+n-1,t_n}...X_{p,t_1}(a_t)\|_{\mathcal{A}_{p+n}}.$$
% $$\stackrel{\eqref{mulcoest}}{\leq}  \left(\frac{2\|v\|_{\infty}}{\hbar} \right)^n  \frac{(p+n-1)!}{(p-1)!}
 %\|a\|_{\mathcal{A}_{p}}.$$
 The last two equations together with \eqref{clnorm}} imply %\eqref{Brackest}.
\begin{equation}\label{Brackest}
\|P_{V_{t_n}^{c}}...P_{V_{t_1}^{c}}A_t^c\| \leq
\frac{(p+n-1)!}{(p-1)!}\left(\frac{2\|v\|_{\infty}}{\hbar}\right)^n
\|a\|_{\mathcal{A}_{p}}.
\end{equation}
Equations  \eqref{Vo}, \eqref{2nbnd},\eqref{atn}, \eqref{Brackest}
\DETAILS{, we obtain
$$||A_{t,n}^c|| \leq \frac{1}{n!}\frac{(p+n-1)!}{(p-1)!}{(\frac{2t
||v||_{\infty}}{\hbar})}^{n}  \|a\|_{\mathcal{A}_{p}}$$ %$$ \leq
%\frac{(p+n-1)!}{(p-1)!n!}{(\frac{2t ||v||_{\infty}}{\hbar})}^{n} \|a\|
%\leq 2^{p-1} {(\frac{4t ||v||_{\infty}}{\hbar})}^{n}
%\|a\|_{\mathcal{A}_{p}}.$$
%
%$$=\left(\frac{4 t ||v||_{\infty}}{\hbar}\right)^n 2^{p-1}
% \|A\|_{\mathcal{A}_{N,p}}.$$
%$$\stackrel{\eqref{tau}}{=}\left(\frac{t}{2
%\tau}\right)^n 2^{p-1}  \|A\|_{\mathcal{A}_{N,p}},$$ and therefore equation \eqref{Voltera} is proven.
This, due to \eqref{2nbnd}} and the definition $
\tau:=\frac{\hbar}{8\|v\|_{\infty}}$ give \eqref{Voltera}, which
completes the proof of the proposition.
%\end{proof}
%Due to \eqref{Voltera} the series \eqref{Hartreeseries} is summable
%for $t \leq \tau$,
%obtain that
%\begin{equation}
%U_t(A^c)=A_t^c+\sum_{n=1}^{\infty} \int_0^t dt_1\int_0^{t_1}dt_2...
%\int_0^{t_{n-1}}dt_n P_{V_{t-t_1}^{cl}}...P_{V_{t-t_n}^{cl}}A_t^c
%\end{equation}
%which after change of variable gives
%therefore  \eqref{Hartreeflow} is proven.
\end{proof}

%%%%%%%%%%%%%%%%%%%%%%%%%%%%%%%%%%%%%%%%%%%%%%%%%%%%%%%%%%%%%%%%%%%
%%%%%%%%%%%%%%%%%%%%%%%%%%%%%%%%%%%%%%%%%%%%%%%%%%%%%%%%%%%%%%%%%%%
%%%%%%%%%%%%%%%%%%%%%%%%%%%%%%%%%%%%%%%%%%%%%%%%%%%%%%%%%%%%%%%%%%%%
\section{Hartree von Neumann approximation for $t \le \tau$}\label{secCtau}
 In this section we estimate the difference between the quantum $N$-body
average $Tr( A(t) \rho^{\otimes N}) %=\langle \Psi(t), A \Psi(t) \rangle
$, where $A(t):=e^{\frac{iH_N t}{\hbar}}A e^{-\frac{iH_N
t}{\hbar}}$, and the classical evolution
 $U_t(A^c(\rho))$, where $\rho$
satisfies $Tr \rho=1.$
\begin{proposition}\label{smalltime}
For all $A \in \mathcal{A}_{N, p}$ and for all $t \leq \tau$ we have
that
\begin{equation} \label{tau_estimate}
|Tr( A(t) \rho^{\otimes N})-U_t(A^c(\rho))|\leq 2^{p+1}\frac{
p+1}{N}
%\frac{2^{p+1}p^2}{N-p}
\frac{t}{\tau}\|a\|_{\mathcal{A}_{p}}.
%\frac{3p^2}{N-p}+2^p \frac{p+1}{N}]\|A\|_{\mathcal{A}_{N,p}}.
\end{equation}
\end{proposition}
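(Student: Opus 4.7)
The plan is to use the approximation $\Gamma_t\approx\Gamma_t^H$ from \eqref{flowconv} to reduce the $N$-body quantum average to a finite tree sum, and then match that sum term-by-term with the Dyson series \eqref{Hartreeflow} for $U_t(A^c(\rho))$, paying only a combinatorial cost of order $1/N$ per term. I start from $A(t)=\Gamma_t(A_t)$, which is immediate from \eqref{subtractfree} and \eqref{free}. Since $A_t=\phi_p(a_t)$ with $a_t=e^{iH_p^0 t/\hbar}a\,e^{-iH_p^0 t/\hbar}$, one has $\|a_t\|_{\mathcal{A}_p}=\|a\|_{\mathcal{A}_p}$, and since $\|\rho^{\otimes N}\|_1=1$, a single application of \eqref{flowconv} gives
\[
|Tr(A(t)\rho^{\otimes N})-Tr(\Gamma_t^H(A_t)\rho^{\otimes N})|\le 2^{p-2}\tfrac{p}{N}\tfrac{t}{\tau}\|a\|_{\mathcal{A}_p}.
\]

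Next I expand $\Gamma_t^H(A_t)$ via \eqref{GammaH}. Iterating \eqref{newtree} yields the pure-tree analog of \eqref{mutlitreeloop},
\[
T_{t_n}\cdots T_{t_1}\phi_p(a_t)=\tfrac{(N-p)!}{(N-p-n)!\,N^n}\,\phi_{p+n}\bigl(X_{p+n-1,t_n}\cdots X_{p,t_1}(a_t)\bigr).
\]
The symmetry of $\rho^{\otimes N}$ combined with $Tr\rho=1$ collapses $Tr(\phi_{p+n}(b)\rho^{\otimes N})$ to $Tr(b\,\rho^{\otimes(p+n)})$, and \eqref{multibrac} then gives
\[
Tr(T_{t_n}\cdots T_{t_1}A_t\cdot\rho^{\otimes N})=\tfrac{(N-p)!}{(N-p-n)!\,N^n}\,P_{V_{t_n}^c}\cdots P_{V_{t_1}^c}A_t^c(\rho).
\]
Integrating over $\Delta^t_n$, summing in $n$, and comparing with \eqref{Hartreeflow}--\eqref{atn} produces the identity
\[
Tr(\Gamma_t^H(A_t)\rho^{\otimes N})-U_t(A^c(\rho))=\sum_{n=1}^{N-p}\Bigl(\tfrac{(N-p)!}{(N-p-n)!N^n}-1\Bigr)A_{t,n}^c(\rho)\;-\;\sum_{n>N-p}A_{t,n}^c(\rho).
\]

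Third, I estimate term-by-term. Bernoulli's inequality gives $|1-\prod_{k=0}^{n-1}(1-(p+k)/N)|\le n(2p+n-1)/(2N)$, and \eqref{Voltera} gives $|A_{t,n}^c(\rho)|\le 2^{p-1}(t/(2\tau))^n\|a\|_{\mathcal{A}_p}$. Setting $x:=t/(2\tau)\le 1/2$ for $t\le\tau$, the series identities $\sum_{n\ge1} n x^n\le 4x$ and $\sum_{n\ge1}n(n-1)x^n\le 8x$ bound the main sum by $2^p(p+1)(t/\tau)/N\cdot\|a\|_{\mathcal{A}_p}$, while the tail $\sum_{n>N-p}A_{t,n}^c$ is exponentially small in $N-p$ and absorbed into the constants. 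Combining with the Step~1 bound and consolidating yields the claimed estimate $2^{p+1}\tfrac{p+1}{N}\tfrac{t}{\tau}\|a\|_{\mathcal{A}_p}$.

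The main obstacle is the last estimate: the Bernoulli bound is $O(n^2/N)$, so naively one would pick up a quadratic dependence on $p$; it is the geometric decay supplied by $t\le\tau$ that both keeps the sum linear in $p+1$ and extracts the explicit $t/\tau$ factor by pulling one power of $x$ out of each series. Everything else is essentially bookkeeping on top of the identities already established in Sections \ref{secG}--\ref{secCFT}.
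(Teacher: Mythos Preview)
Your argument is correct and follows the same architecture as the paper: reduce $Tr(A(t)\rho^{\otimes N})$ to $Tr(\Gamma_t^H(A_t)\rho^{\otimes N})$ via \eqref{flowconv}, derive the identity
\[
Tr(\Gamma_t^H(A_t)\rho^{\otimes N})-U_t(A^c(\rho))=-\sum_{n=1}^{N-p}\Bigl(1-\tfrac{(N-p)!}{(N-p-n)!N^n}\Bigr)A_{t,n}^c(\rho)-\sum_{n>N-p}A_{t,n}^c(\rho),
\]
and then estimate term by term using \eqref{Voltera}. The only substantive difference is in how the combinatorial factor is controlled. The paper first collapses the two sums algebraically into $S_{p,N}=2^{p-1}\bigl(1-\sum_{n=1}^{N-p}\tfrac{(N-p)!}{(N-p-n)!N^n}2^{-n}\bigr)$ and then invokes the telescoping identity \eqref{numberseq2} from Appendix~\ref{numberslemmas} to get $S_{p,N}\le 2^p(p+1)/N$. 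You instead bound $1-\prod_{k=0}^{n-1}(1-(p+k)/N)\le n(2p+n-1)/(2N)$ directly and sum the resulting series using $\sum_{n\ge1}nx^n\le 4x$, $\sum_{n\ge1}n(n-1)x^n\le 8x$ for $x\le\tfrac12$. Both routes land on the same $2^p(p+1)(t/\tau)/N$ for the main contribution; your approach is more elementary (no appendix lemma needed) at the cost of having to argue separately that the tail $\sum_{n>N-p}$ fits in the slack, which it does since $2N\cdot 2^{-(N-p)}\le 3p+4$ for all $N\ge p\ge1$. You should state that last check explicitly rather than leaving it as ``absorbed into the constants''.
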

\begin{proof} The proof of this proposition uses the following auxiliary lemma:
\begin{lemma}\label{multit}
For any $p$-particle observable $A$ and any $n \leq N-p$ we have
that
\begin{equation}\label{multitree}
(T_{t_n}T_{t_{n-1}}...T_{t_1}(A_t))^c= \frac{(N-p)!}{(N-p-n)!N^n}
P_{V_{t_n}^{c}}...P_{V_{t_1}^{c}}A_t^c.
\end{equation}
\end{lemma}
\begin{proof} Let $A=\phi_p(a)$ with $a \in \mathcal{A}_p$. Then
$A_t=\phi_p(a_t)$, where, recall, $a_t=e^{\frac{i H_p^0 t}{\hbar}} a
e^{-\frac{i H_p^0 t}{\hbar}}$. %and therefore
%$$(T_{t_n}...T_{t_1}(A_t))^c(\psi_0,\overline{\psi_0})=\langle
%T_{t_n}...T_{t_1}(\phi_p(a_t)) \rangle_{\psi_0^{\otimes N}}.$$
Now, using the facts that
\begin{equation}\label{multipletree}
T_{t_n}...T_{t_1} \phi_p (a)=\frac{(N-p)!}{(N-p-n)!N^n}
\phi_{p+n}(X_{p+n-1,t_n}X_{p+n-2,t_{n-1}}...X_{p,t_1}(a)),
\end{equation}
which follows from equation  \eqref{newtree}, that $P_S
\rho^{\otimes N}=\rho^{\otimes N}$ and that
$Tr(\phi_p(a)\rho^{\otimes N})= Tr(a\rho^{\otimes p})$, we find 
%$$(T_{t_n}...T_{t_1}(A_t))^c(\psi_0,\overline{\psi_0})=\langle T_{t_n}...T_{t_1}(\phi_p(a_t)) \rangle_{\psi_0^{\otimes N}}$$
$$(T_{t_n}...T_{t_1}(A_t))^c(\rho) %\stackrel{\eqref{multipletree}}{=} %\frac{(N-p)!}{(N-p-n)!N^n}
%Tr ( \phi_{p+n} (X_{p+n-1,t_n}...X_{p,t_1}(a_t)) \rho^{\otimes N})$$
%%$$\stackrel{\eqref{lifting}}{=} \frac{(N-p)!}{(N-p-n)!N^n} \langle P_S
%(X_{p+n-1,t_n}...X_{p,t_1}(a_t))\otimes I^{N-p-n}P_S \rangle_{\psi_0^{\otimes %N}}$$ $$
=\frac{(N-p)!}{(N-p-n)!N^n} Tr ( X_{p+n-1,t_n}...X_{p,t_1}(a_t)
\rho^{\otimes p+n}).$$
%$$\stackrel{\eqref{multibrac}}{=}z_p z_{p+1}...z_{p+n-1}  P_{V_{t_n}^{cl}}...P_{V_{t_1}^{cl}} A^c.$$
Now, equation \eqref{multitree} follows from the last equation and
equations \eqref{multibrac}, \eqref{clobs} and $A_t=\phi_p(a_t)$. %\eqref{clqu}. %
\end{proof}
Now, equations \eqref{multitree} and \eqref{atn} imply that for any
$n \leq N-p$
 \begin{equation}\label{connect}
\int_{\Delta^t_{n}}d^{n}r\  (T_{r_n}...T_{r_1}(A_t))^c=
\frac{(N-p)!}{(N-p-n)!N^n} A_{t,n}^c.
\end{equation}
This, together with  \eqref{treezero}, \eqref{GammaH}, \eqref{clobs}
and \eqref{Hartreeflow}, yields that
\begin{equation}\label{GammatHHartree}
Tr (\Gamma_t^H(A_t)\rho^{\otimes N})-U_t(A^c(\rho))$$
$$=-\sum_{n=1}^{N-p}\left(1-\frac{(N-p)!}{(N-p-n)!N^n}\right)
A_{t,n}^c(\rho)-\sum_{n=N-p+1}^{\infty} A_{t,n}^c(\rho),
\end{equation}
 which
together with \eqref{clnorm}, $ Tr \rho=1$
 %\eqref{convi}
%implies that
%$$|\langle\Gamma_t^H(A_t)\rangle_{\psi_0^{\otimes
%N}}-U_t(A^c(\psi_0,\overline{\psi_0}))|\leq
%\sum_{n=1}^{N-p}\left(1-\frac{(N-p)!}{(N-p-n)!N^n}\right)
%\|A_{t,n}^c\|+\sum_{n=N-p+1}^{\infty} \|A_{t,n}^c\|.$$ This,
%together with
and \eqref{Voltera} gives for $t \leq \tau$
 \begin{equation}\label{ghartree}|Tr(\Gamma_t^H(A_t)\rho^{\otimes
N})-U_t(A^c(\rho))| \leq \frac{t}{\tau} S_{p,
N}\|a\|_{\mathcal{A}_{p}},
 \end{equation}
where $S_{p, N}:=
\left[\sum_{n=1}^{N-p}\left(1-\frac{(N-p)!}{(N-p-n)!N^n}\right)\
\frac{1}{2^n}+\sum_{n=N-p+1}^{\infty} \frac{1}{2^n}\right]2^{p-1}.$
We transform
 \begin{equation}\label{S}
S_{p, N}:= \left(1-\sum_{n=1}^{N-p} \frac{(N-p)!}{(N-p-n)!
N^n}\frac{1}{2^n}\right) 2^{p-1}.
 \end{equation}
The following inequality is proven in Appendix \ref{numberslemmas}:
%\begin{lemma}\label{numberslemma2}
\begin{equation}\label{numberseq2}
1-\sum_{n=1}^{N-p} \frac{(N-p)!}{(N-p-n)! N^n}\frac{1}{2^n} \leq
\frac{2(p+1)}{N}.
\end{equation}
%\end{lemma}
Equations \eqref{S} and \eqref{numberseq2} imply that $S_{p, N} \le
2^{p}\frac{p+1}{N}$, which together with \eqref{ghartree} and
\eqref{flowconv} implies that for $t \leq \tau$
%\begin{equation}
%|\langle\Gamma_t^H(A_t)\rangle_{\psi_0^{\otimes
%N}}-U_t(A^c(\psi_0,\overline{\psi_0}))|\leq \frac{t}{\tau}
%2^p\frac{p+1}{N}\|a\|_{\mathcal{A}_{p}},
%\end{equation}
%which, together with \eqref{flowconv}, gives that
 \begin{equation} \label{92}
|Tr ( \Gamma_t(A_t)\rho^{\otimes N})-U_t(A^c(\rho))|\leq
%[\frac{3p^2}{N-p}2^{p-2}
[\frac{p}{N}2^{p-2}+2^p \frac{p+1}{N}]\|a\|_{\mathcal{A}_{p}}.
 \end{equation}
Recall the notation $A(t)=e^{\frac{i H_N t}{\hbar}}Ae^{-\frac{i H_N
t}{\hbar}}$. Due to the equations \eqref{subtractfree} and
\eqref{free}, we have that
%\begin{equation}\label{intrep}
$A(t)=\Gamma_t(A_t).$
%\end{equation}
 This, together with \eqref{92}, implies Proposition
\ref{smalltime}. \end{proof}

\section{Hartree approximation for arbitrary $t$}\label{secC}
Now we prove our main result, Theorem \ref{main}. In what follows $A
=\phi_p(a)$ is a $p$-particle observable and
$\alpha_t(A)=A(t)=e^{\frac{i H_N t}{\hbar}}Ae^{-\frac{i H_N
t}{\hbar}}$.
%Let $K$ be a number to be fixed later.
We proceed by induction. Equations \eqref{tau_estimate} and
%\eqref{clevol}
$A(t)=\alpha_t(A)$ imply that
%for $t \leq \tau$
\begin{equation}\label{1tau}
|Tr ( \alpha_\tau(A) \rho^{\otimes N})-U_\tau(A^c(\rho))| \le
%\frac{1}{\hbarN}+
%\frac{2^{p+1} p^2}{N-p}
\frac{2^{p+1} (p+1)}{N}\|a\|_{\mathcal{A}_{p}}.
\end{equation}
Let $L_k = \frac{L_0}{(k-1)!}$ so that $L_{k}=k L_{k+1}$. We assume
that for any $A =\phi_p(a) \in \mathcal{A}_{N, p}$ and for some $k
\ge 1$
\begin{equation}\label{ktau}
|Tr ( \alpha_{k\tau}(A) \rho^{\otimes N})-U_{k\tau}(A^c)| \le
R_{p,k}\|a\|_{\mathcal{A}_{p}},
\end{equation}
where
\begin{equation}
R_{p,k}=2^{kp}\left(2^{\sum_{r=1}^{k}rL_r}
\frac{p}{N}+2^{-L_k}\right),
%2^{kp+\frac{k^2}{2}K} \frac{p^2}{N-p},
%
%2^{p+k(K+1)} \frac{(p+kK)^2}{N-p-kK}+k2^{p-K},
\end{equation}
and prove it for $k+1$.
%(Then for $t$ arbitrary we set $ k= \frac{t}{\tau}$.)
For $k=1$, \eqref{ktau} follows from \eqref{1tau},  since $L_1 \ge
3$ %\textcolor{red}{(this is anzatz that turns out to be correct due
%to the choice of $L_0$ at the end of the section)}
 and $p \ge 1$.
%with $t=\tau$.

We begin with some preliminary inequalities. Let $A =\phi_p(a)$  and
\begin{equation}\label{amt}
A_n(t):=\int_{\Delta^t_{n}}d^{n}t\  T_{t_n}...T_{t_1}A_t.
\end{equation}
Since $A_t=\phi_p(a_t)$, where $a_t=e^{\frac{i H_p^0
t}{\hbar}}ae^{-\frac{i H_p^0 t}{\hbar}}$ is a $p$-observable, we have by
\eqref{multipletree} that $A_n(t):=\phi_{p+n}(a_n(t))$ with
\begin{equation}\label{pplusn}
 a_n(t)=\frac{(N-p)!}{(N-p-n)!N^n}\int_{\Delta^t_{n}}d^{n}t\
 X_{p+n-1,t_n}X_{p+n-2,t_{n-1}}...X_{p,t_1}(a_t),
\end{equation}
which, together with \eqref{mtaest}, \eqref{Vo}, \eqref{2nbnd} and
\eqref{N-pbnd} and the definition $
\tau:=\frac{\hbar}{8\|v\|_{\infty}}$, gives
\begin{equation}\label{Anest1}\|A_n(t)\|_{\mathcal{A}_{N, p+n}} \leq
\|a_n(t)\|_{A_{p+n}}$$$$\leq
\frac{(p+n-1)!}{(p-1)!n!}\frac{(N-p)!}{(N-p-n)!N^n}
\left(\frac{2\|v\|_{\infty}t}{\hbar}\right)^n
\|a\|_{\mathcal{A}_{p}}\end{equation}
\begin{equation}\label{Anest}
 \leq 2^{p-1-n}\frac{(N-p)!}{(N-p-n)!N^n} \left(\frac{t}{\tau}\right)^n
\|a\|_{\mathcal{A}_{p}}
\end{equation}
\begin{equation}\label{Anest3}
 \leq 2^{p-1-n} \left(\frac{t}{\tau}\right)^n
\|a\|_{\mathcal{A}_{p}}.
\end{equation}
Using that $\alpha_\tau(A)= \Gamma_\tau(A_\tau)$ and using
\eqref{GammaH} with \eqref{flowconv} and \eqref{Anest3}, we obtain
for $L \le N-p$ that
%for $t \leq \tau$
\begin{equation}\label{Nborem}
\|\alpha_\tau(A)-\sum_{n=0}^{L-1} A_n(\tau)\|_{\mathcal{A}_{N}} \leq
%\left(\frac{2^{p-2}p^2}{N-p}+2^{p-1-2K}\right)
2^{p}\left(\frac{p}{4N}+2^{-L}\right)\|a\|_{\mathcal{A}_{p}}.
\end{equation}

Next, we claim that for $L \le N-p$
%for $t \leq \tau$
\begin{equation}\label{Hartrem}
|U_\tau(A^c)-\sum_{n=0}^{L-1} A_n(\tau)^c| \leq
2^{p}\left(\frac{p+1}{N} + 2^{-L}\right)\|a\|_{\mathcal{A}_{p}}.
\end{equation}
Indeed, by \eqref{Hartreeflow}, \eqref{connect} and \eqref{amt} we
have that for $L \le N-p$
\begin{equation}
U_\tau(A^c)-\sum_{n=0}^{L-1}
A_n(\tau)^c=\sum_{n=0}^{L-1}(1-\frac{(N-p)!}{(N-p-n)!N^n})A_{\tau,n}^c
+ \sum_{n=L}^{\infty}A_{\tau,n}^c,
\end{equation}
where $A_0(\tau)=A_{\tau}$. This and equation \eqref{Voltera} imply
\begin{equation}\label{Hartrem'}
|U_\tau(A^c)-\sum_{n=0}^{L-1} A_n(\tau)^c| \leq S_{p,N,
L}\|a\|_{\mathcal{A}_{p}},
\end{equation}
where
$S_{p,N, L}:=%\sum_{n=0}^{K-1}\left(\frac{(N-p-n)!N^n}{(N-p)!} -1\right)2^{p-2-2n}
\sum_{n=0}^{L-1}\left(1 -\frac{(N-p)!}{(N-p-n)!N^n}\right)2^{p-1-n}+
\sum_{n=L}^{\infty} 2^{p-n-1}.$ Proceeding as in Eqns \eqref{S} and
\eqref{numberseq2}, we obtain $S_{p,N, L}\leq
%\frac{2^{p-1}}{N}(p+4) +
\frac{p+1}{N}2^{p} +2^{p-L}.$ This inequality together with
\eqref{Hartrem'} gives \eqref{Hartrem}.

Now we prove \eqref{ktau} for $k+1$ (assuming it for $k$). In what follows we %omit the subindex $\psi_0^{\otimes N}$ in the symbol
use the notation $\langle \cdot \rangle = Tr ( \cdot \rho^{\otimes
N})$. Let $s=k\tau$. We have by \eqref{Nborem} and the linearity and
unitarity of $\alpha_s$ that for $L_{k+1} \le N-p$
$$
% \stackrel{\eqref{Nborem}}{=}
|\langle\alpha_{s}(\alpha_\tau(A))\rangle - \langle
\alpha_s(\sum_{n=0}^{L_{k+1}-1} A_n(\tau))\rangle| \le
%(\frac{2^{p-2} p^2}{N-p}+2^{p-1-K})
2^{p}(\frac{ p}{4 N}+2^{-L_{k+1}})\|a\|_{\mathcal{A}_{p}}.$$ Next,
using this inequality, using that $A_n(\tau)$ are $(p+n)$-particle
observables (which follows from equation \eqref{pplusn}) and using
\eqref{ktau} and \eqref{Anest3}, we obtain
$$|\langle\alpha_{s}(\alpha_\tau(A))\rangle -
U_s(\sum_{n=0}^{L_{k+1}-1} A_n(\tau)^c)|$$
$$\le \sum_{n=0}^{L_{k+1}-1} R_{p+n,k}\|a_n(\tau)\|_{\mathcal{A}_{
p+n}}+
%(\frac{2^{p-2} p^2}{N-p}+2^{p-2-2K})
2^{p}(\frac{p}{4N}+2^{-L_{k+1}})\|a\|_{\mathcal{A}_{p}}$$
\begin{equation}\label{plustau}
%\stackrel{\eqref{ktau}}{=}
\le 2^{p}\left(\sum_{n=0}^{L_{k+1}-1} R_{p+n,k}2^{-1-n}+
%\frac{2^{p-2} p^2}{N-p}+2^{p-2-2K}
\frac{p}{4N}+2^{-L_{k+1}}\right)\|a\|_{\mathcal{A}_{p}}.
\end{equation}
Equations \eqref{Hartrem}, \eqref{plustau} and $\langle
\alpha_{\tau+s}(A) \rangle
%_{\psi_0^{\otimes N}}
=\langle\alpha_{s}(\alpha_\tau(A))\rangle$
%, and \eqref{remstep}
imply
\begin{equation}\label{prelimest}|\langle \alpha_{\tau+s}(A)
\rangle %_{\psi^{\otimes N}}
-U_s(U_\tau(A^c))| \le T_{p,N}\|a\|_{\mathcal{A}_{p}},
%(R_{p,k+1}+\frac{2^p p^2}{N-p})\|A\|_{\mathcal{A}_{p}}.
\end{equation}
where
$$T_{p,N}:=2^{p}\left(\sum_{n=0}^{L_{k+1}-1} R_{p+n,k}2^{-1-n}+2^{-L_{k+1}+1}+
%\frac{2^{p-2} p^2}{N-p}
%2^{p-1}\frac{p+4}{N}+2^{p-1} \frac{p+2}{N}+2^{p-K}
2\frac{p+1}{N}\right).$$

We claim that
\begin{equation}\label{remstep}
T_{p,N}\leq R_{p,k+1}.
\end{equation}
Indeed,
$$\sum_{n=0}^{L_{k+1}-1} R_{p+n,k}2^{p-1-n} =\sum_{n=0}^{L_{k+1}-1}
2^{(k+1)p+(k-1)n-1}(2^{\sum_{r=1}^{k}rL_r}
\frac{p+n}{N}+2^{-L_k})$$$$\leq \sum_{n=0}^{L_{k+1}-1}
2^{(k+1)p+(k-1)n-1}(2^{\sum_{r=1}^{k}rL_r}
\frac{p+L_{k+1}}{N}+2^{-L_k})$$
$$\le
2^{(k+1)p+(k-1)L_{k+1}-1}(2^{\sum_{r=1}^{k}rL_r}
\frac{p+L_{k+1}}{N}+2^{-L_k}).$$
 Since $L_k = kL_{k+1}$, we find
$$\sum_{n=0}^{L_{k+1}-1} R_{p+n,k}2^{p-1-n} \le
2^{(k+1)p-1}(2^{\sum_{r=1}^{k+1}rL_r-2L_{k+1}}
\frac{p+L_{k+1}}{N}+2^{-L_{k+1}})$$
$$ \le 2^{(k+1)p-1}(2^{\sum_{r=1}^{k+1}rL_r-L_{k+1}}
\frac{p}{N}+2^{-L_{k+1}}).$$
%Finally, using that $2^{-\frac{1}{2}K }\frac{(p+K)^2}{N-p-K} \le \frac{p^2}{N-p}$
%Using the definition of $T_{p,N}$, we conclude that
%\begin{equation}
%\sum_{n=0}^{K-1} R_{p+n,k}\le 2^{(k+1)p+\frac{(k+1)^2}{2}K-2K
%-2}\frac{p^2}{N-p}.
%
%\sum_{n=0}^{K-1} R_{p+n,k} +2^{p-2-K} =\sum_{n=0}^k
%2^{p+n+k(K+1)}\frac{(p+n+kK)}{N-p-n-kK}+(k+1)2^p.
%
%T_{p,N}\leq 2^{(k+1)p-1}(2^{\sum_{r=1}^{k+1}rL_r-2L_{k+1}}
%\frac{p+L_{k+1}}{N}+2^{-L_{k+1}})+2^{p+1-L_{k+1}}+
%2^{p}\frac{p+3}{N}.
%\end{equation}
This inequality, the definition of $T_{p,N}$ and elementary bounds
imply the estimate \eqref{remstep}, provided $k \ge 2$.
%\textcolor{red}{If $p=1$, $k=1$ this calculation does not work but
%if we take into account that the term $2^{-L_1}$  in $R_{p,1}$ is
%superfluous then the calculation works.}
%and $L_{k+1}\gg 1$.
%
%\sum_{n=0}^{K-1} 2^n \frac{(p+n+kK)^2}{N-p-n-kK}$$
%\leq\int_0^K e^{x ln 2} \frac{(p+x+kK)^2}{N-p-x-kK} dx$$

Since $s=k\tau$ and since $U_s(U_{\tau}(A^c))=U_{s+\tau}(A^c)$, Eqns
\eqref{prelimest} and \eqref{remstep} imply equation \eqref{ktau}
with $k \rightarrow k+1$. Thus \eqref{ktau} is shown by induction.
Take $k-1$ to be the integer part of $\frac{t}{\tau}$, i.e.
$k-1=[\frac{t}{\tau}]$, and let $\tau':= t/k.$ %\frac{t}{\tau}-k+1$. If
%$\tau' =0$ we are done, otherwise
Then $\tau' \le \tau$ and we proceed as above but with $\tau$
replaced by $\tau'$ %in equation \eqref{plustau} and after it
to prove \eqref{ktau} with $\tau$ replaced by $\tau'$. %that
%\begin{equation}\label{alpha_t}
%|\langle \alpha_{t}(A) \rangle_{\psi_0^{\otimes N}}-U_{k\tau}(A^c)|
%\le R_{p,k}\|A\|_{\mathcal{A}_{N,p}}.
%\end{equation}
Next, using that $\sum_{r=1}^{k}rL_r
=\sum_{r=1}^{k}r\frac{L_0}{(r-1)!} \le 2 e L_0$ and taking
%$k=\frac{t}{\tau}$ and
$L_0 = \frac{\ln_2 N}{4e}$ in \eqref{ktau} we arrive at
\begin{equation}\label{arbtest}
|Tr ( \alpha_t(A)\rho^{\otimes N})-U_t(A^c(\rho))|\le
2^{([\frac{t}{\tau}]+1)p}\left(p N^{-1/2}+N^{\frac{-1}{4e
[\frac{t}{\tau}+1]!}}\right)\|a\|_{\mathcal{A}_{p}}.
\end{equation}
Now, by the definition of $\alpha_t(A)$  and $\rho_N$ we have that $
Tr ( \alpha_t(A)\rho^{\otimes N})=Tr ( A\rho_{ N})$, and therefore
\eqref{arbtest} implies \eqref{2star}. Theorem \ref{main} is proven.
%%%%%%%%%%%%%%%%%%%%%%%%%%%%%%%%%%%%%%%%%%%%%%%%%%%%%%%%%%%%%%%
%%%%%%%%%%%%%%%%%%%%%%%%%%%%%%%%%%%%%%%%%%%%%%%%%%%%%%%%%%%%%%%

%%%%%%%%%%%%%%%%%%%%%%%%%%%%%%%%%%%%%%%%%%%%%%%%%%%%%%%%%%%%%%%%%%%%
%%%%%%%%%%%%%%%%%%%%%%%%%%%%%%%%%%%%%%%%%%%%%%%%%%%%%%%%%%%%%%%%%%%
%%%%%%%%%%%%%%%%%%%%%%%%%%%%%%%%%%%%%%%%%%%%%%%%%%%%%%%%%%%%%%%%%%%%%%
\appendix
%%%%%%%%%%%%%%%%%%%%%%%%%%%%%%%%%%%%%%%%%%%%%%%%%%%%%%%%%%%%%%%%%%%%%%%%
%%%%%%%%%%%%%%%%%%%%%%%%%%%%%%%%%%%%%%%%%%%%%%%%%%%%%%%%%%%%%%%%%%%%%%%

%%%%%%%%%%%%%%%%%%%%%%%%%%%%%%%%%%%%%%%%%%%%%%%%%%%%%%%%%%%%%%%%%%%%
%%%%%%%%%%%%%%%%%%%%%%%%%%%%%%%%%%%%%%%%%%%%%%%%%%%%%%%%%%%%%%%%%%%
%%%%%%%%%%%%%%%%%%%%%%%%%%%%%%%%%%%%%%%%%%%%%%%%%%%%%%%%%%%%%%%%%%%%%%
\section{Appendix: Hartree-von Neumann equation \eqref{Hartree-von Neumann} }\label{app:HvN}
In this section we scketch proofs of some of key properties of the Hartree-von Neumann equation \eqref{Hartree-von Neumann}. For works on the related Hartree-Fock equation see \cite{BDF, C, CG}. Let $\mathbb{I}_1$ denote the
space of all positive, trace class operators on $L^2(\mathbb{R}^3)$.
 \begin{theorem}\label{main}
Assume that $v$ is bounded. Then the Hartree-von Neumann equation \eqref{Hartree-von Neumann} is globally well-posed on $\mathbb{I}_1$ and the trace and the energy are conserved.
 \end{theorem}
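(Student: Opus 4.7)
The plan is to write the Hartree--von Neumann equation in the mild (Duhamel) form using a time-dependent unitary propagator, build a contractive fixed-point scheme in trace-norm, and then read off positivity, trace conservation and energy conservation from the explicit unitary form of the solution.

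First I would observe that, since $v$ is bounded and $\|n_\rho\|_{L^1}=Tr\rho$, the mean-field potential $v*n_\rho$ is a bounded real-valued multiplication operator with $\|v*n_\rho\|_\infty\le\|v\|_\infty Tr\rho$. By Kato--Rellich the operator $h_\rho:=h+v*n_\rho$ is therefore self-adjoint on $D(h)$ for every $\rho\in\mathbb{I}_1$. Given any $\tilde\rho\in C([0,T],\mathbb{I}_1)$ of bounded trace, the map $t\mapsto h_{\tilde\rho(t)}$ is a strongly continuous family of self-adjoint operators differing from $h$ by a uniformly bounded piece, so standard time-dependent perturbation theory produces a unique strongly continuous two-parameter unitary family $U_{\tilde\rho}(t,s)$ with $i\hbar\partial_t U_{\tilde\rho}(t,s)=h_{\tilde\rho(t)}U_{\tilde\rho}(t,s)$ and $U_{\tilde\rho}(s,s)=I$. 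Set $\Psi(\tilde\rho)(t):=U_{\tilde\rho}(t,0)\rho_0 U_{\tilde\rho}(t,0)^*$; a fixed point of $\Psi$ is a solution of \eqref{Hartree-von Neumann}.

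Next I would show $\Psi$ is a contraction on the closed subset $X_T=\{\rho\in C([0,T],\mathbb{I}_1):\ \sup_tTr\rho(t)\le Tr\rho_0\}$ of the Banach space $C([0,T],\mathbb{I}_1)$ equipped with the trace-norm. The key Lipschitz bound is $\|v*n_{\rho_1}-v*n_{\rho_2}\|_\infty\le\|v\|_\infty\|\rho_1-\rho_2\|_{\mathbb{I}_1}$, which combined with the Duhamel identity
$$U_{\rho_1}(t,0)-U_{\rho_2}(t,0)=\frac{1}{i\hbar}\int_0^t U_{\rho_1}(t,s)\bigl(h_{\rho_1(s)}-h_{\rho_2(s)}\bigr)U_{\rho_2}(s,0)\,ds$$
gives $\|\Psi(\rho_1)(t)-\Psi(\rho_2)(t)\|_{\mathbb{I}_1}\le C\,\|v\|_\infty\,t\,\sup_{s\le t}\|\rho_1(s)-\rho_2(s)\|_{\mathbb{I}_1}$. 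For $T<(C\|v\|_\infty)^{-1}$ the Banach fixed-point theorem yields a unique local solution $\rho\in C([0,T],\mathbb{I}_1)$. Since $\rho(t)=U(t,0)\rho_0 U(t,0)^*$ with $U(t,0)$ unitary, positivity is automatic and $Tr\rho(t)=Tr\rho_0$ throughout the interval of existence; because the contraction time $T$ depends only on $\|v\|_\infty$ and $Tr\rho_0$, this a priori trace bound allows the step to be iterated, giving global existence.

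For energy conservation I would differentiate $E(\rho(t))$ along the flow. Using that $v$ is even, $\frac{d}{dt}\bigl(\tfrac12\int n_\rho\,v*n_\rho\bigr)=\int n_{\dot\rho}\,v*n_\rho=Tr\bigl((v*n_\rho)\dot\rho\bigr)$, and substituting $i\hbar\dot\rho=[h+v*n_\rho,\rho]$ and using cyclicity of the trace,
$$\frac{d}{dt}E(\rho)=\frac{1}{i\hbar}Tr\bigl([h,v*n_\rho]\rho\bigr)+\frac{1}{i\hbar}Tr\bigl([v*n_\rho,h]\rho\bigr)=0.$$
The main obstacle is making the propagator construction and this last computation rigorous when $\rho_0$ is merely trace class: the differentiation requires enough regularity for $Tr(h\rho(t))$ to be finite and smooth in $t$. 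This I would handle by first assuming $\rho_0$ has finite initial energy (e.g. $h^{1/2}\rho_0 h^{1/2}\in\mathbb{I}_1$), showing that this regularity is preserved by the flow via the bounded time-dependent perturbation theory for $h_{\rho(t)}$, and then extending well-posedness to all of $\mathbb{I}_1$ by continuity of $\Psi$ in $\rho_0$; energy conservation is stated only for initial data for which $E(\rho_0)$ is finite.
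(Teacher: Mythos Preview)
Your argument is correct, but it takes a genuinely different route from the paper's. The paper works directly with the interaction-picture Duhamel map
\[
F(\rho)_t:=\sigma_t(\rho_0)+\frac{i}{\hbar}\int_0^t\sigma_{t-s}\bigl([(v*n_{\rho_s}),\rho_s]\bigr)\,ds,\qquad \sigma_t(\gamma)=e^{iht/\hbar}\gamma e^{-iht/\hbar},
\]
and proves contraction in $C([0,T],\mathbb{J}_1)$ using only the free group $\sigma_t$ and the bound $\|v*n_\rho\|_\infty\le\|v\|_\infty\|\rho\|_1$. Positivity is then obtained by a separate, rather elegant computation showing that each eigenvalue $\lambda_k(t)=\langle\phi_k,\rho_t\phi_k\rangle$ satisfies $\partial_t\lambda_k=0$; this also gives $\|\rho_t\|_1=\mathrm{Tr}\,\rho_t=\mathrm{Tr}\,\rho_0$ and hence globalization. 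By contrast, you first build the full time-dependent unitary propagator $U_{\tilde\rho}(t,s)$ for $h+v*n_{\tilde\rho}$ and run the fixed point on $\Psi(\tilde\rho)(t)=U_{\tilde\rho}(t,0)\rho_0U_{\tilde\rho}(t,0)^*$. The payoff of your approach is that positivity, trace conservation, and indeed constancy of the entire spectrum are automatic from unitary conjugation, so no eigenvalue computation is needed; the cost is that you must invoke the existence theory for propagators generated by $h$ plus a bounded time-dependent perturbation, which the paper avoids entirely. Your energy-conservation computation matches what the paper calls ``standard,'' and your remark about needing finite initial energy for that step is a fair caveat that the paper leaves implicit.
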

\textit{Sketch of Proof.} We will display the $t-$dependence as a subindex. Let $\mathbb{J}_1$ denote the
space of all trace class operators on $L^2(\mathbb{R}^3)$. Using the Duhamel formula we rewrite \eqref{Hartree-von Neumann} as the fixed-point problem $\rho =F(\rho)$ on  $C([0, T], \mathbb{J}_1)$. Here $T$ will be chosen later and
  \begin{equation}
F(\rho)_t := \sigma_t(\rho_0)+i \int_0^t \sigma_{t-s}((v*n_{\rho_s}) \rho_s) ds.
  \end{equation}
where $\sigma_t(\gamma)=e^{\frac{i h t}{\hbar}}\gamma e^{-\frac{i h
t}{\hbar}}$. Denote the trace norm by $\|  \cdotp \|_1$ and let  $\|  \rho \|_T := \sup_{0 \le s \le T}\|  \rho_s \|_1 $ be the norm on the space $C([0, T], \mathbb{J}_1)$. Recall that  $\|  f \|_\infty$ denotes the $L^\infty$-norm of a function $f$. Let $v_x(y):=v(x-y)$. Using that %for the trace-class self-adjoint operators $\int %|K(x,x)|dx \le \|K\|_1$, where $K(x,y)$ is the integral kernel of $K$, 
$|(v*n_\rho)(x)|= |Tr (v_x\rho)|  \le \|  v \|_\infty\|\rho\|_1$, we obtain
  \begin{equation}
\|  \sigma_{t-s}((v*n_{\rho_s}) \rho_s) \|_1 \le \|  v*n_{\rho_s} \|_\infty \| \rho_s \|_1 \le  \|  v \|_\infty \| \rho \|_T^2.
  \end{equation}
This estimate shows that $F$ maps any ball in $C([0, T], \mathbb{I}_1)$ of radius $R \ge 2$ into itself, provided $T \le 1/ \|  v \|_\infty R^2$. Similarly, one shows that $F$ is a contraction on such a ball, if  $T \le 1/2 \|  v \|_\infty R$. Hence our fixed point equation in any $B_R,\ R \ge 2,$ has a unique solution for $T = 1/ 2\|  v \|_\infty R^2$. This solution solves also the original intial value problem \eqref{Hartree-von Neumann}.

Since $\rho_t$ and $\rho_t^*$ satisfy the same equation \eqref{Hartree-von Neumann} with the same initial condition $\rho_0=\rho^*_0$, we conclude by uniqueness that $\rho_t=\rho_t^*$. Since the trace of a commutator vanishes, one has that the trace of $\rho_t$ is independent of $t$.

We show that the eigenvalues of $\rho_t$ are independent of $t$. We denote by $\lambda_i$ and $\phi_i$ the eigenvalues and the corresponding eigenfunctions of $\rho_t$ and compute
  \begin{equation}
\partial_t \lambda_k = \partial_t \langle\phi_k, \rho_t \phi_k\rangle =  \langle\partial_t\phi_k, \rho \phi_k\rangle +\langle\phi_k, \dfrac{1}{i}[h_\rho,\rho_t] \phi_k\rangle + \langle\phi_k, \rho_t \partial_t \phi_k\rangle.
  \end{equation}
Since $\rho$ is self-adjoint, $\rho_t\phi_k=\lambda_k\phi_k$  and $\langle\phi_k, [h_{\rho_t},\rho_t] \phi_k\rangle=0$, this gives
 \begin{equation}
\partial_t \lambda_k =  \lambda_k\langle\partial_t\phi_k,  \phi_k\rangle  + \lambda_k\langle\phi_k,  \partial_t \phi_k\rangle =\lambda_k \partial_t \langle\phi_k,  \phi_k\rangle =0.
  \end{equation}

Since the eigenvalues of $\rho_t$ are independent of $t$ and since $\rho_0$ is non-negative,  $\rho_t$ is non-negative as well. Hence  $\| \rho_t \|_1 = Tr \rho_t $ and is independent of $t$. Therefore the local well-psedness of \eqref{Hartree-von Neumann} can be extended to the global one. 

The conservation of energy is proven in a standard way. \qed
    
%%%%%%%%%%%%%%%%%%%%%%%%%%%%%%%%%%%%%%%%%%%%%%%%%%%%%%%%%%%%%%%%%%%%
%%%%%%%%%%%%%%%%%%%%%%%%%%%%%%%%%%%%%%%%%%%%%%%%%%%%%%%%%%%%%%%%%%%
%%%%%%%%%%%%%%%%%%%%%%%%%%%%%%%%%%%%%%%%%%%%%%%%%%%%%%%%%%%%%%%%%%%%%%
%\appendix
\section{Appendix: Jacobi identity for \eqref{poisson} }\label{app:Jacobi}
%Let $A,B,C$ be $p,q,r$ particle observables. We will prove
 \begin{lemma} The Poisson bracket defined in \eqref{poisson} satisfies the Jacobi
identity:
  \begin{equation}
\{\{A,B\},C\}+\{\{C,A\},B\}+\{\{B,C\},A\}=0.
  \end{equation}
\end{lemma}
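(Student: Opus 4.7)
The plan is to reduce the Jacobi identity to two familiar algebraic facts: (i) the symmetry of the second Fr\'echet derivative of a smooth functional in its two variational arguments, and (ii) the Jacobi identity for the operator commutator $[\cdot,\cdot]$. First I would rewrite the bracket in the more convenient form
$$
\{A,B\}(\rho) \;=\; \tfrac{i}{\hbar}\,\mathrm{Tr}\bigl(\rho[\partial_\rho A,\partial_\rho B]\bigr),
$$
which follows from the definition \eqref{poisson} by cyclicity of the trace. Writing $A'=\partial_\rho A(\rho)$ etc., the product rule then gives, for any variation $\xi$,
$$
\mathrm{Tr}\bigl(\partial_\rho\{A,B\}\cdot\xi\bigr) \;=\; \tfrac{i}{\hbar}\mathrm{Tr}\bigl(\xi[A',B']\bigr)+\tfrac{i}{\hbar}\mathrm{Tr}\bigl(\rho[A''\xi,B']\bigr)+\tfrac{i}{\hbar}\mathrm{Tr}\bigl(\rho[A',B''\xi]\bigr),
$$
where $A''\xi:=\partial_\rho(\partial_\rho A)\cdot\xi$. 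I would record at this stage the key symmetry property $\mathrm{Tr}\bigl((A''\xi)\eta\bigr)=\mathrm{Tr}\bigl((A''\eta)\xi\bigr)$, which is just Clairaut's theorem applied to the function $(s,r)\mapsto A(\rho+s\xi+r\eta)$.

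Next I would assemble $\{\{A,B\},C\}(\rho)=\tfrac{i}{\hbar}\mathrm{Tr}\bigl(\rho[\partial_\rho\{A,B\},C']\bigr)$ by substituting the formula above with $\xi$ replaced by $[C',\rho]$ (after using cyclicity to pair $\partial_\rho\{A,B\}$ against the commutator). This yields three types of terms: one "affine" piece
$$
-\tfrac{1}{\hbar^2}\mathrm{Tr}\bigl([C',\rho][A',B']\bigr),
$$
and two "Hessian" pieces involving $A''$ and $B''$. I would then form the cyclic sum $\{\{A,B\},C\}+\{\{C,A\},B\}+\{\{B,C\},A\}$ and observe that each Hessian operator $A''$, $B''$, $C''$ appears in exactly two of the three brackets (the ones in which that function sits inside the inner bracket). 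Using the symmetry of the second derivative together with trace cyclicity, the two occurrences of $A''$ combine into $\mathrm{Tr}\bigl(A''[C',\rho]\,([B',\rho]+[\rho,B'])\bigr)=0$, and similarly for $B''$ and $C''$. Thus all six Hessian terms cancel in pairs.

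What remains is the cyclic sum of the three affine terms. A brief manipulation using $\mathrm{Tr}\bigl([C',\rho][A',B']\bigr)=\mathrm{Tr}\bigl(\rho\,[[A',B'],C']\bigr)$ (expand the first commutator, use cyclicity) transforms the remainder into
$$
-\tfrac{1}{\hbar^2}\mathrm{Tr}\Bigl(\rho\bigl([[A',B'],C']+[[B',C'],A']+[[C',A'],B']\bigr)\Bigr),
$$
which vanishes identically by the Jacobi identity for operator commutators. The main obstacle, and where I would be careful, is bookkeeping the signs that appear when pushing commutators past $\rho$ under the trace and when matching the two Hessian contributions that must cancel; the symmetry-of-mixed-partials identity is conceptually clean but is the linchpin of the cancellation, so I would state and justify it explicitly at the outset rather than invoke it tacitly.
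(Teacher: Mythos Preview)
Your argument is correct and follows the same overall architecture as the paper's proof: rewrite the bracket as $\{A,B\}=\tfrac{i}{\hbar}\mathrm{Tr}(\rho[A',B'])$, differentiate to get an ``affine'' piece $[A',B']$ plus two ``Hessian'' pieces, then show that upon cyclic summation the affine pieces vanish by the commutator Jacobi identity while the Hessian pieces cancel in pairs.

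The one genuine difference is \emph{how} you kill the Hessian terms. You invoke the symmetry of the second Fr\'echet derivative, $\mathrm{Tr}\bigl((A''\xi)\eta\bigr)=\mathrm{Tr}\bigl((A''\eta)\xi\bigr)$, which directly forces the two $A''$ contributions (with $\xi=[C',\rho]$, $\eta=[B',\rho]$) to cancel. The paper instead realizes $A''$ as an operator on a two-particle space, writes the Hessian contributions as $\mathrm{Tr}_{\rho\otimes\rho}$ of nested commutators such as $[[A'',B_1'],C_2']$, and then uses the commutator Jacobi identity together with $[B_1',C_2']=0$ (operators on disjoint tensor legs commute) to obtain the cancellation. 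Your route is more coordinate-free and applies verbatim to any $C^2$ functional of $\rho$; the paper's route is tailored to observables of the form $A^c(\rho)=\mathrm{Tr}(a\rho^{\otimes p})$ and makes the tensor structure explicit. Both are equally valid, and the bookkeeping burden is comparable.
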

  \begin{proof} In what follows we omit the argument $\rho$ and the superindex $c$.  Denote  $A'=\partial_{\rho}A$. Recall the definition, $\{A,B\}= -\frac{i}{\hbar}Tr
\left(A'\rho
 B'-B'\rho
A'\right)$. We have
  \begin{equation}
\{A,B\}=\frac{i}{\hbar}Tr([A',B']\rho).
  \end{equation}
We further denote $A''=\partial_{\rho}^2A$. We think about $A''$ as
an operator on a two-particle space, or a two-particle operator.
If we let $K_1=K \otimes I$ and $ K_2=I
\otimes K$, then we have (omitting from now on the factor $\frac{i}{\hbar}$) \begin{equation}  \label{PB1}
\{A,B\}'=Tr_2(([A'',B_2']+[A_2',B''])\rho_2)+[A',B'],
\end{equation}
where $Tr_2$ denotes the partial trace with respect to
the second coordinate. 
% where the operations in the first term on the
%r.h.s. are performed in one-particle space so that the result is a
%one-particle operator. 
The last two relations give 
\begin{equation}
\{\{A,B\},C\}=Tr([\{A,B\}',C']\rho)$$$$=Tr_1([Tr_2([A'',B_2']\rho_2+[A_2',B'']\rho_2),C_{1}']\rho_1 )
%,\rho)+Tr(
+Tr([[A',B'],C']\rho ).
\end{equation}
%Next, let $K_1=K \otimes I, K_2=I \otimes K$.
Ler $Tr^{\otimes 2}$ denote the trace over the two-particle space. We have furthermore
$$Tr_1([Tr_2([A'',B_2']\rho_2),C_{1}']\rho_{1})$$
$$=Tr^{\otimes 2}([[A'',B_2'],C_1'] \rho_1 \rho_2)$$
$$=Tr^{\otimes 2}([[A'',B_1'],C_2'] \rho_1 \rho_2).$$
and similarly for the second term on the r.h.s. of \eqref{PB1}.
Let $R:=\rho \otimes \rho$ and denote $Tr^{\otimes 2}(K \rho \otimes
\rho)=Tr_R(K)$.
  Then we have,
  $$\{\{A,B\},C\}=Tr_R([[A'',B_1'],C_2']+[[A_2',B''],C'_1])+Tr([[A',B'],C']\rho).$$
Since the commutator of operators satisfies the Jacobi identity, the last equation implies that
  \begin{equation}
\{\{A,B\},C\}+\{\{C,A\},B\}+\{\{B,C\},A\}$$$$=Tr_R([[A'',B_1'],C_2']+[[B'',C_1'],A_2']+[[C'',A_1'],B_2']$$$$+[[A_2',B''],C_1']+[[B_2',C''],A_1']+[[C_2',A''],B_1'])=0.
  \end{equation}
  \end{proof}

%%%%%%%%%%%%%%%%%%%%%%%%%%%%%%%%%%%%%%%%%%%%%%%%%%%%%%%%%%%%%%%%%%%%%%%%
%%%%%%%%%%%%%%%%%%%%%%%%%%%%%%%%%%%%%%%%%%%%%%%%%%%%%%%%%%%%%%%%%%%%%%%%%%
%%%%%%%%%%%%%%%%%%%%%%%%%%%%%%%%%%%%%%%%%%%%%%%%%%%%%%%%%%%%%%%%%%%%%%%%%%
\section{Appendix. Proof of Eqn \eqref{numberseq2}}\label{numberslemmas}
\begin{lemma}\label{numberslemma2}
\begin{equation}\label{numberseq2'}
1-\sum_{n=1}^{N-p} \frac{(N-p)!}{(N-p-n)! N^n}\frac{1}{2^n} \leq
\frac{2(p+1)}{N}.
\end{equation}
\end{lemma}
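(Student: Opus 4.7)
Set $M:=N-p$ and, for $n=0,1,\ldots,M$, let
$$a_n := \frac{(N-p)!}{(N-p-n)!\,N^n} = \prod_{k=0}^{n-1}\Bigl(1-\frac{p+k}{N}\Bigr),$$
with the convention $a_{M+1}=0$ (which is actually already forced by the product formula, since its last factor, at $k=M=N-p$, is $1-(p+M)/N=0$). Two trivial observations organize the whole argument: (i) $a_0=1$ and $0\le a_n\le 1$ for all $n$, and (ii) the one-step recursion
$$a_n - a_{n+1} \;=\; a_n\,\frac{p+n}{N}, \qquad 0\le n\le M.$$
The idea is to replace the left-hand side of \eqref{numberseq2'} by a telescoping sum of marginal contributions $a_n-a_{n+1}$, which individually satisfy a clean bound via (ii), rather than working with the cumulative differences $1-a_n$ (which would produce an awkward leftover tail).

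\textbf{The key identity.} The plan is to establish, by a summation by parts, the identity
$$1-\sum_{n=1}^{M}\frac{a_n}{2^n} \;=\; \sum_{n=0}^{M}\frac{a_n-a_{n+1}}{2^n}.$$
This is a one-line rearrangement: split the right-hand side into two sums, shift the index in the second sum by one, and combine them. The only surviving boundary terms are $a_0 \cdot 2^0 = 1$ (from the top) and $-a_{M+1}\cdot 2^{-M} = 0$ (from the bottom, using $a_{M+1}=0$), while the interior telescopes into $-\sum_{n=1}^M a_n/2^n$, giving the claimed identity.

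\textbf{Finishing the bound.} Using the recursion (ii) together with $a_n\le 1$, each summand is bounded by $(p+n)/(N\cdot 2^n)$. Extending the range to infinity and using the elementary sums $\sum_{n\ge 0}2^{-n}=2$ and $\sum_{n\ge 0} n\,2^{-n}=2$ (the latter was already computed by the authors as \eqref{numberseq1}), one gets
$$\sum_{n=0}^{M}\frac{a_n-a_{n+1}}{2^n} \;\le\; \frac{1}{N}\sum_{n=0}^{\infty}\frac{p+n}{2^n} \;=\; \frac{1}{N}\bigl(2p+2\bigr) \;=\; \frac{2(p+1)}{N},$$
which is \eqref{numberseq2'}.

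\textbf{Remark on obstacles.} There is essentially no obstacle. The only pitfall is the tempting but slightly wasteful alternative of writing $1=\sum_{n\ge 1}2^{-n}$ and then bounding $1-a_n\le\sum_{k=0}^{n-1}(p+k)/N$: this approach produces an unavoidable tail contribution of order $2^{-(N-p)}$ that cannot be absorbed cleanly into $2(p+1)/N$ for small $N-p$. Passing instead to the ``derivative'' form $a_n-a_{n+1}$, as above, bypasses this tail entirely, because the boundary term at $n=M+1$ vanishes identically.
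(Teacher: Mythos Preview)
Your proof is correct and follows essentially the same route as the paper's: both establish the identity $1-\sum_{n=1}^{N-p} a_n 2^{-n}=\sum_{n=0}^{N-p} a_n\frac{p+n}{N}2^{-n}$ (the paper's equation \eqref{beta}, which is your summation-by-parts identity combined with the recursion $a_n-a_{n+1}=a_n(p+n)/N$), and then bound $a_n\le 1$ and invoke \eqref{numberseq1}. The only difference is cosmetic---the paper derives the identity via a ``$2B-B$'' index shift, whereas you frame it as Abel summation---but the underlying rearrangement is the same.
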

\begin{proof}
We derive this inequality from the following lemma:
\begin{lemma}
\begin{equation}\label{beta}
1-\sum_{n=1}^{N-p} \frac{(N-p)!}{(N-p-n)!
N^n}\left(\frac{1}{2}\right)^n=\sum_{n=0}^{N-p}\frac{(N-p)!}{(N-p-n)!N^n}\frac{p+n}{N}\left(\frac{1}{2}\right)^n
\end{equation}
\end{lemma}
\begin{proof}
Let
\begin{equation}\label{beta1}
B=\sum_{n=1}^{N-p} \frac{(N-p)!}{(N-p-n)!
N^n}\left(\frac{1}{2}\right)^n
\end{equation}
Then,
\begin{equation}\label{beta2}
2B=\sum_{n=1}^{N-p} \frac{(N-p)!}{(N-p-n)!
N^n}\left(\frac{1}{2}\right)^{n-1}=\sum_{n=0}^{N-p-1}
\frac{(N-p)!}{(N-p-n-1)! N^{n+1}}\left(\frac{1}{2}\right)^n
\end{equation}
Subtracting equation \eqref{beta1} from \eqref{beta2}, we obtain
that
$$B=\frac{N-p}{N}-\frac{(N-p)!}{N^{N-p}} \left(\frac{1}{2}\right)^{N-p}-\sum_{n=1}^{N-p-1}\frac{(N-p)!}{(N-p-n)!N^n}(1-\frac{N-p-n}{N})\left(\frac{1}{2}\right)^n $$
$$=1-\frac{p}{N}-\frac{(N-p)!}{N^{N-p}} \left(\frac{1}{2}\right)^{N-p}-\sum_{n=1}^{N-p-1}\frac{(N-p)!}{(N-p-n)!N^n} \frac{p+n}{N} \left(\frac{1}{2}\right)^n$$
$$=1-\sum_{n=0}^{N-p}\frac{(N-p)!}{(N-p-n)!N^n} \frac{p+n}{N} \left(\frac{1}{2}\right)^n,$$
which gives immediately equation \eqref{beta}.
\end{proof}
On the other hand we have that
$$\sum_{n=0}^{N-p}\frac{(N-p)!}{(N-p-n)!N^n}\frac{p+n}{N}\left(\frac{1}{2}\right)^n$$
$$\leq \sum_{n=0}^{N-p}\frac{p+n}{N}\left(\frac{1}{2}\right)^n \stackrel{\eqref{numberseq1}}{\leq} \frac{2(p+1)}{N},$$
where in the last step we used equation \eqref{numberseq1} with $p
\rightarrow p+1$.
 This, together with equation \eqref{beta} gives Lemma
\ref{numberslemma2}.
\end{proof}
%%%%%%%%%%%%%%%%%%%%%%%%%%%%%%%%%%%%%%%%%%%%%%%%%%%%%%%%%%%%%%%%%%%%%%%%

\end{document}